\theoremstyle{definition}
\newtheorem{definition}{Definition}[section]
\newtheorem*{assumption*}{Assumption}
\newtheorem*{condition*}{Condition}
\theoremstyle{plain}
\newtheorem{theorem}[definition]{Theorem}
\newtheorem{proposition}[definition]{Proposition}
\newtheorem{lemma}[definition]{Lemma}
\theoremstyle{remark}
\newcommand{\N}{\mathbb{N}}
\newcommand{\R}{\mathbb{R}}
\newcommand{\E}{\mathbb{E}}
\newcommand{\Var}{\operatorname{Var}}
\newcommand{\Cov}{\operatorname{Cov}}
\newcommand{\deq}{\overset{d}{=}}
\newcommand{\diag}{\mathrm{diag}}
\newcommand{\tr}{\text{tr}}
\newcommand{\beps}{\boldsymbol{\epsilon}}
\newif\ifhideproofs
\title{Projection inference for high-dimensional covariance matrices with structured shrinkage targets}
\author{Fabian Mies\and Ansgar Steland}
\date{RWTH Aachen University\\[2ex] \today}
\begin{document}
\maketitle

\begin{abstract}
    Analyzing large samples of high-dimensional data under dependence is a challenging statistical problem as long time series may have change points, most importantly in the mean and the marginal covariances, for which one needs valid tests. Inference for large covariance matrices is especially difficult due to noise accumulation, resulting in singular estimates and poor power of related tests. The singularity of the sample covariance matrix in high dimensions can be overcome by considering a linear combination with a regular, more structured target matrix.
	This approach is known as shrinkage, and the target matrix is typically of diagonal form. 
	In this paper, we consider covariance shrinkage towards structured nonparametric estimators of the bandable or Toeplitz type, 
 respectively, aiming at improved estimation accuracy and statistical power of tests even under nonstationarity.
	We derive feasible Gaussian approximation results for bilinear projections of the shrinkage estimators which are valid under nonstationarity and dependence.
	These approximations especially enable us to formulate a statistical test for structural breaks in the marginal covariance structure of high-dimensional time series without restrictions on the dimension, and which is robust against nonstationarity of nuisance parameters.
	We show via simulations that shrinkage helps to increase the power of the proposed tests.
	Moreover, we suggest a data-driven choice of the shrinkage weights, and assess its performance by means of a Monte Carlo study.
	The results indicate that the proposed shrinkage estimator is superior for non-Toeplitz covariance structures close to fractional Gaussian noise. 
 
	\textbf{Keywords:} structural break; Gaussian approximation; nonstationary time series; bilinear form
\end{abstract}

\section{Introduction}

Many modern applications encounter large data sets of very high dimension $d$, which requires new approaches for modeling as well as statistical inference, since classical methods usually fail. A particularly prominent example is the covariance matrix $\Sigma$ of a random vector of $d$ variables, which has $d^2$ unknown parameters.
It is well-known that the sample covariance matrix $\hat{\Sigma}_n$ of sample size $n$ is singular if $n<d$, and consistency of $\hat{\Sigma}_n$ for growing dimension $d$ requires to control the $O(d^2) $ covariances and otherwise generally fails. Specifically, for $n$ i.i.d.\ samples of $d$-dimensional random vectors with finite fourth moments, consistency in terms of the expected squared Frobenius norm of $\hat{\Sigma}_n- \Sigma_n $ requires that $ \diag( \Sigma_n ) $ and the variance of the average of the squared variables to be of the order $ o(n/d) $, \citep{Ledoit2004}. In terms of eigenvalues and eigenvectors, classic inconsistency results for the largest eigenvalue and the top eigenvector under the regime $ d/n \to c \in (0,1) $ are due to \citep{Johnstone2001,Johnstone2009}, even when $ \Sigma = I $. Since, however, it is generally believed that many high-dimensional data sets are governed by, say $r \ll d$, strong signals, spiked covariance models with $r$ leading eigenvalues strictly larger than $1$, whereas the remaining bulk of the spectrum consists of unit eigenvalues, are quite extensively studied. Here the leading eigenvalues of $ \hat{\Sigma}_n $ are biased, thus requiring de-biasing procedures which, however, depend on the loss function, and the eigenvectors are inconsistent as well, see \citep{DonohoGavinJohnstone2018} and the references therein. 
%Even if $d\approx c n$ for some $c\in(0,1)$, the sample covariance matrix is an inconsistent estimator, as neither does largest eigenvalue converge towards its population counterpart \citep{Johnstone2001}, nor the corresponding eigenvector \citep{Johnstone2009}. 
Further potential remedies are to impose additional structural assumptions on $\Sigma$, e.g.\ sparsity \citep{bickel2008a}, Toeplitz shape \citep{cai2013}, or assuming the entries further from the diagonal to decay rapidly, also known as a bandable covariance matrix \citep{bickel2008}.
In these frameworks, it is possible to derive consistent estimators of $\Sigma$ even if $d\gg n$, see \cite{Cai2016} for a survey of suitable estimators and matching minimax bounds.

The structured covariance matrix estimators admit optimal rates of convergence within the specified class of matrices.
However, if the true $\Sigma$ does not satisfy the structural assumptions, the estimators might perform poorly.
For instance, if we use an estimator of Toeplitz form, but $\Sigma$ is not a Toeplitz matrix, then this estimator will be inconsistent.
In order to strike a compromise between structured and model-free estimation, one may consider a shrinkage estimator of the form 
\begin{align*}
    \hat{\Sigma}^w_n = (1-w)\,\hat{\Sigma}_n\; + \; w\, \tilde{\Sigma}_n, \qquad w\in [0,1],
\end{align*}
where $\hat{\Sigma}_n$ is the usual sample covariance matrix, and the shrinkage target $\tilde{\Sigma}_n$ is a structured estimator. 
The weight $w$ may be interpreted as quantifying the confidence in the structural assumption underlying the estimator $\tilde{\Sigma}$. 
A particularly useful feature of shrinkage estimators is that $\hat{\Sigma}_n^w$ will be regular, as $\tilde{\Sigma}_n$ is usually chosen as a regular matrix. 
This makes shrinkage estimation especially interesting for situations where the inverse of the covariance matrix is required, e.g.\ in portfolio optimization.
\cite{Ledoit2004} suggest to use a multiple of the identity matrix, and \cite{Steland2018a} studies more general diagonal matrices.
Linear shrinking towards a structured target with respect to the scaled Frobenius loss has been studied in \citep{LedoitWolf2003}, but the results therein on consistent estimators of the optimal shrinking weights consider classical fixed-$d$ asymptotics only. In the present paper, we go beyond a convex combination of the sample covariance matrix with a single shrinkage target. Instead, we study convex combinations with a structured covariance estimator with known minimax optimality properties for bandable $ \Sigma $, and a structured covariance estimator which is minimax-optimal if $ \Sigma $ is of Toeplitz type. 

Theoretical analysis of these structured covariance estimators typically focuses on their consistency and rate of convergence. Here, we contribute by showing that the estimators proposed when $ \Sigma $ is bandable and Toeplitz, respectively, are still consistent under a general nonstationary  nonlinear time series model, whereas the known results assume iid $d$-dimensional random vectors. 
Moreover, to perform statistical inference, distributional approximations are required. 
Here, we consider inference about the high-dimensional covariance matrix in terms of projections $v^T\Sigma v$ for some vector $v\in\R^d$.
The vector $v$ may either be determined by the application, or chosen at random, independently from the sample.
For example, in risk management, $v$ may represent weights of a portfolio of risky assets, and $v^T\Sigma v$ corresponds to the variance of this portfolio. 
As another example, the hypothesis $\Sigma = \Sigma_0$ may be studied via random projections: 
If $v$ is drawn from some continuous distribution, then $\Sigma\neq \Sigma_0$ implies $v^T\Sigma v \neq v^T\Sigma_0 v$ almost surely.
It is thus reasonable to employ a test statistic of the form $v^T(S-\Sigma_0) v$ for some estimator $S$, either structured, unstructured, or shrunken. Further applications (lasso prediction, sparse principal components) are discussed below. 
The idea to study high-dimensional covariance matrices via bilinear forms has been suggested in \cite{Steland2017, Steland2018}, and applied to changepoint testing \citep{Steland2019} and to a K-sample problem \cite{Mause2020}.
In the latter references, as well as in the present paper, the important assumption is that the projection vectors $v$ have a bounded $l_1$-norm. This condition is satisfied in many applications and enables the methodology to work also for very high dimensions $d$.

In this paper, we derive Gaussian approximation results for shrinkage estimators of the covariance matrix with structured shrinkage targets $\tilde{\Sigma}_n$. 
Our results are valid for nonstationary, nonlinear, high-dimensional time series $X_t$, and we demonstrate how to perform inference based on a suitable bootstrap scheme.
Notably, we only impose assumptions on the marginal time series, while the dependency between the components may be arbitrary, and we need no restrictions on the dimension $d$. 
This is a beneficial consequence of the projection technique, as described in Section \ref{sec:projection}. 
As we allow the dimension to grow with sample size, the considered models are arrays of time series.
Thus, classical central limit theory is not applicable because a limiting distribution in general does not exist.
Instead, we formulate our results in terms of so-called strong approximations, or sequential couplings, in the sense of \cite{komlos1975}.
Compared to the existing work of \cite{Steland2017, Steland2018}, we consider a much broader class of nonlinear time series models, which also allows us to study non-diagonal, structured shrinkage targets.
Underlying our mathematical results are recent Gaussian couplings for nonstationary time series established in \cite{mies2022} for the regime $d\ll n^\frac{1}{3}$.
We extend these results to projections $v^T X_t$ of very high-dimensional time series $X_t$.
In particular, it is shown that for $l_1$-bounded projection vectors $v$, the Gaussian approximation is valid irrespective of the ambient dimension $d$, and of the dependence structure between the $d$ components.
It turns out that this general projection framework is also applicable to bilinear forms of sample covariance matrices, as demonstrated in Section \ref{sec:coupling-cov}.
The presented distributional approximation also holds sequentially, and we demonstrate how to apply this to a test for structural breaks in the covariance matrix. 
An attractive feature of the proposed test is that it is robust against general nonstationarity under the null hypothesis, and only detects changes in the target parameter, i.e.\ the marginal covariance structure.
Furthermore, we discuss the optimal choice of shrinkage weights and provide a data-driven criterion.
Simulation results demonstrate that the asymptotic theory is also applicable in finite samples, and that shrinkage may improve the performance both for estimation and for change testing.

The rest of this paper is structured as follows. 
In Section \ref{sec:projection}, we present a general sequential Gaussian approximation result for projections of high-dimensional nonstationary time series, which is applied to bilinear forms of covariance matrices in Section \ref{sec:coupling-cov}. 
The data-driven choice of shrinkage weights is discussed in Section \ref{sec:optimal}.
Section \ref{sec:changepoint} describes the application of our distributional approximations to tests for changes in the covariance structure of a high-dimensional time series, and presents a feasible bootstrap scheme.
Simulation results are presented in Section \ref{sec:simulation}.

\section{Gaussian approximations for projections}\label{sec:projection}

We consider a $d$-variate time series $X_t$ which may be nonlinear and nonstationary, given by
\begin{align}
	X_t = G_t(\epsilon_t,\epsilon_{t-1},\ldots), \label{eqn:ts-model}
\end{align}
for measurable mappings $G_{t}:\R^\infty\to\R^{d}$, $t=1,\ldots, n$, where we endow $\R^\infty$ with the $\sigma$-algebra generated by all finite projections.
The $\epsilon_i$ are iid $U[0,1]$ random variables, which may be regarded as random seeds for the time series.
The model formulation \eqref{eqn:ts-model} allows for a convenient formulation of ergodicity conditions via the physical dependence measure introduced by \cite{Wu2005}.
To formulate our assumptions on the kernels $G_t$, introduce a second sequence $\tilde{\epsilon}_i$ of iid $U[0,1]$ random variables, independent of the sequence $\epsilon_i$.
For any $t\in\N$, denote 
\begin{align*}
	\beps_t &= (\epsilon_t, \epsilon_{t-1},\ldots) \in\R^\infty, \\
	\tilde{\beps}_{t,j} &= (\epsilon_t,\ldots, \epsilon_{j+1}, \tilde{\epsilon}_j,  \epsilon_{j-1},\ldots) \in\R^\infty, 
\end{align*}
such that $X_t = G_t(\beps_t) = (G_{t,l}(\beps_t))_{l=1}^d$.
We assume that the impact of past random seeds decays to zero polynomially, such that for some $\beta>1$
\begin{align}
	\begin{split}
		\left(\E |G_{t,l}(\beps_t) - G_{t,l}(\tilde{\beps}_{t,t-j})|^q\right)^\frac{1}{q} &\leq \tilde{\Theta} j^{-\beta},\qquad  j\geq 0,\quad l=1,\ldots,d,\\
		\left(\E |G_{t,l}(\beps_0)|^q\right)^\frac{1}{q} &\leq \tilde{\Theta},\qquad l=1,\ldots,d. 
	\end{split}\label{eqn:ass-ergodic-proj} \tag{P.1}
\end{align}

The time series $X_t$ may be nonstationary, and the nonstationarity is explicit by making the kernel $G_t$ depend on $t$. 
Nevertheless, to obtain stronger asymptotic results, we require some regularity in time. 
Here, we do not choose classical smoothness conditions, but rather formulate the regularity in terms of the total variation norm of the mapping $t\mapsto G_t$, measured in $L_2(P)$.
In particular, we suppose that for some $\tilde{\Gamma}>0$,
\begin{align}
	\sum_{t=2}^{n}\left( \E |G_{t,l}(\beps_0) - G_{t-1,l}(\beps_0)|^2 \right)^\frac{1}{2} 
	&\leq \tilde{\Gamma}\cdot\tilde{\Theta},\quad l=1,\ldots,d. \label{eqn:ass-BV-proj} \tag{P.2}
\end{align}

A few results require to assume
\[
 \text{$ X_t $ has finite and stationary eighth moment and $ \beta > 2 $.} \tag{P.3}
\]

We highlight that \eqref{eqn:ass-ergodic-proj} and \eqref{eqn:ass-BV-proj} only impose conditions on the individual components $l=1,\ldots,d$, and are thus rather easy to verify even for high-dimensional models. To emphasize, we impose no assumption on the dependency among the $d$ components.

Now consider the projection of the $d$-variate time series $X_t$ to a $m$-dimensional time series $Z_t$, with $m\ll d$.
That is, for a matrix $V\in\R^{m\times d}$, we consider the time series $Z_t = V X_t$.
%Typically, $d$ will be very large and $m\ll d$, such that the matrix $V$ serves as a linear reduction of dimension.
We measure the size of the matrix $V$ by the operator norm, $\|V\|_\infty$, with respect to the maximum vector norm, i.e.\
\begin{align*}
	\|V\|_\infty = \sup_{w\in\R^d} \frac{\|Vw\|_\infty}{\|w\|_\infty} = \max_{l=1,\ldots,m} \|V_{l,\cdot}\|_1,
\end{align*}
where $V_{l,\cdot}$ denotes the $l$-th row of the matrix $V$. This choice of norm uniformly controls the $ \ell_1 $-norms of the projection vectors and is motivated by highdimensional statistical problems. Indeed, as well known, if $d$ is large compared to $n$, classical statistical methods usually fail, because even strong signals are overlayed by too many random noise sources (noise accumulation) and spurious correlations occur. To overcome the curse of high dimensionality, sparse methods typically use $ \| \cdot \|_r$-norms with $ 1 \le r < 2$, often leading to $s$-sparse solutions, i.e.\ with only $s$ active (non-zero) coordinates.  For example, sparse PCA constructs $s$-sparse directions $ u $ on which the data vectors are projected by calculating $ u^T X_t$, and the variance of such a coefficients $ u^T X_t $, given by the quadratic form $ u^T \Sigma_n u $ and coinciding with the associated eigenvalue when $ u $ is an eigenvector, provides information about the importance of that direction and is thus of interest. Similarly, when predicting a future response $\tilde{Y}_t$ by a lasso regression on $ X_t $, one calculates $ \hat{\beta}_n^T \tilde{X}_t $ for future regressors $ \tilde{X}_t $, where the lasso estimate $ \hat{\beta}_n $ is $s-$sparse and thus selects $s \ll d$ variables, so that it behaves like a low-dimensional statistic circumventing the issues arising for large dimension. The variance $ \hat{\beta}_n^T \Sigma_n \hat{\beta}_n^T $ of the prediction $ \hat{\beta}_n^T \tilde{X}_t$ is a natural measure of the prediction accuracy. When considering slightly more general vectors $ v $ with bounded $ \| v \|_1 $-norm (uniformly in $d$ and $n$), then we have the bound $ | v^T x_t | \le \| v \|_1 \max_{1 \le j \le d} | X_{tj} |$ at our disposal. Indeed, such projections have bounded moments under weak assumptions guaranteed by (P.1) as summarized in the following lemma, whose assertions no longer hold if $ \| v \|_2 $ is uniformly bounded but $ \| v \|_1 \to \infty $.

\begin{lemma}\label{lem:jensen} Let $ q \ge 1 $ and suppose that $ \max_{1 \le j \le d} \E |X_{tj}|^q \le \tilde{\Theta} $ for some constant $C$ and all $d, n$, cf. (P.1). Then for any $ v \in \R^s $ with norm $ \| v \|_1 $ uniformly bounded by some $M$ 
\[
  \left(\E | v^T X_t |^q\right)^{\frac{1}{q}} \le \| v \|_1 \max_{1 \le j \le d} ( \E |X_{tj}|^q )^{\frac{1}{q}} \le \tilde{\Theta} M
\]
\end{lemma}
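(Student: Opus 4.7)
The plan is to prove this by a direct application of Minkowski's inequality for the $L^q$ norm (or equivalently, a convexity/Jensen argument, consistent with the lemma's name). First I would write $v^T X_t = \sum_{j=1}^d v_j X_{tj}$ and apply the triangle inequality in $L^q(P)$, namely
\[
 \bigl( \E | v^T X_t |^q \bigr)^{1/q} \le \sum_{j=1}^d |v_j| \bigl( \E |X_{tj}|^q \bigr)^{1/q}.
\]
Then I would factor out the uniform bound on the marginal $L^q$-norms, $\max_{1\le j\le d}(\E|X_{tj}|^q)^{1/q} \le \tilde\Theta$ (reading the hypothesis as a bound on the $L^q$-norm, matching (P.1)), which yields $\sum_{j=1}^d |v_j| (\E|X_{tj}|^q)^{1/q} \le \|v\|_1 \,\tilde\Theta \le M\tilde\Theta$, giving both inequalities in the statement.

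If I wanted to emphasize the Jensen route suggested by the label \texttt{lem:jensen}, I would instead normalize by $\|v\|_1$, write $v^T X_t/\|v\|_1 = \sum_j (|v_j|/\|v\|_1)\,\sgn(v_j)\,X_{tj}$ as a convex combination of the signed coordinates, and apply Jensen's inequality to the convex function $x\mapsto |x|^q$ (for $q\ge 1$) to get $|v^T X_t|^q \le \|v\|_1^q \sum_j (|v_j|/\|v\|_1)\,|X_{tj}|^q$. Taking expectations, bounding by the maximum, and then the $q$-th root gives the same conclusion.

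There is no real obstacle here: the lemma is a one-line consequence of Minkowski/Jensen. The purpose of writing it out explicitly is conceptual, namely to make visible why $\ell_1$-boundedness of the projection vector (rather than $\ell_2$-boundedness) is the right condition in the high-dimensional regime; the counterexample sketched in the surrounding text (that the bound fails when only $\|v\|_2$ is controlled while $\|v\|_1\to\infty$) would then be worth a brief remark, but it is not part of the lemma's proof itself.
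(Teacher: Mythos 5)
Your proposal is correct, and your second (Jensen) variant — normalizing by $\|v\|_1$, viewing the sum as a convex combination, and applying Jensen to $x\mapsto|x|^q$ — is precisely the paper's proof. The Minkowski route you lead with is an equally valid one-line alternative; note only that the paper's final step should read as an inequality ($\sum_j \frac{|v_j|}{\|v\|_1}\E|X_{tj}|^q \le \max_j \E|X_{tj}|^q$), as you correctly have it.
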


The formulation of our first main result requires the two rates
\begin{align*}
	\chi(q,\beta) = \begin{cases}
		\frac{q-2}{6q-4}, & \beta \geq \frac{3}{2}, \\
		\frac{(\beta-1)(q-2)}{q(4\beta-3)-2}, &\beta\in(1,\frac{3}{2}).
	\end{cases},
	\quad
	\xi(q,\beta) = \begin{cases}
		\frac{q-2}{6q-4}, & \beta \geq 3, \\
		\frac{(\beta-2)(q-2)}{(4\beta-6)q-4},& \frac{3+\frac{2}{q}}{1+\frac{2}{q}} < \beta < 3, \\
		\frac{1}{2}-\frac{1}{\beta}, & 2< \beta \leq \frac{3+\frac{2}{q}}{1+\frac{2}{q}}.\\
	\end{cases}
\end{align*}

\begin{theorem}\label{thm:Gauss-ts-proj}
	Let $X_t=G_t(\beps_t)$ with $\E(X_t)=0$ be such that \eqref{eqn:ass-ergodic-proj} holds for some $q>2$, $\beta>1$, and let $Z_t = V X_t$ for some $V\in\R^{m\times d}$ such that $m\leq cn$ for some $c>0$.
	Then, on a potentially different probability space, there exist random vectors $(X_t')_{t=1}^n\deq (X_t)_{t=1}^n$ and independent, mean zero, Gaussian random vectors $Y_t'$ such that
	\begin{align}
		\left(\E \max_{k\leq n} \left\|\frac{1}{\sqrt{n}} \sum_{t=1}^k  (V X_t' - Y_t') \right\|_2^2 \right)^\frac{1}{2}
		&\leq C \tilde{\Theta} \|V\|_\infty \sqrt{m \,\log(n)}  \left( \frac{m}{n} \right)^{\chi(q,\beta)}. \label{eqn:Gauss-ts-1-proj}
	\end{align} 
	for some universal constant $C=C(q,\beta,c)$.
	
	If $\beta>2$, the local long-run variance $\Xi_t(v) = \sum_{h=-\infty}^\infty V\Cov(G_t(\beps_0),G_t(\beps_h))V^T$ is well defined.
	If \eqref{eqn:ass-BV-proj} is satisfied aswell, then there exist random vectors $(X_t')_{t=1}^n\deq (X_t)_{t=1}^n$ and independent, mean zero, Gaussian random variables $Y_t'\sim\mathcal{N}(0,\Xi_t(v))$ such that
	\begin{align}
		\begin{split}
			\left(\E \max_{k\leq n} \left\| \frac{1}{\sqrt{n}}\sum_{t=1}^k  (V X_t' - Y_t') \right\|_2^2 \right)^\frac{1}{2} 
			&\leq C \tilde{\Theta} \|V\|_\infty \sqrt{m \,\log(n)}  \left( \frac{m}{n} \right)^{\xi(q,\beta)}.
		\end{split}
		\label{eqn:Gauss-ts-2-proj}
	\end{align}
\end{theorem}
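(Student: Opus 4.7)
The natural approach is to reduce the statement to the nonstationary multivariate Gaussian approximation of \cite{mies2022} applied directly to the $m$-variate projected series $Z_t = V X_t$, and then track how the input constants scale under projection. Concretely, the plan is to verify that $Z_t$ fits into the kernel framework of \eqref{eqn:ts-model}, establish the analogues of \eqref{eqn:ass-ergodic-proj} and \eqref{eqn:ass-BV-proj} for the components of $Z_t$, and invoke the coupling result of \cite{mies2022}, which is available in dimension $m \ll n^{1/3}$ (accommodated here by the hypothesis $m \leq c n$ together with the factor $(m/n)^{\chi(q,\beta)}$ in the bound).

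First I would write $Z_t = H_t(\beps_t)$ with $H_t = V \circ G_t$, so that row $l$ of $V$ gives $H_{t,l}(\beps_t) = \sum_{j=1}^d V_{l,j} G_{t,j}(\beps_t)$. By Minkowski's inequality applied to the coupling increment,
\begin{align*}
	\bigl( \E |H_{t,l}(\beps_t) - H_{t,l}(\tilde{\beps}_{t,t-j})|^q \bigr)^{1/q}
	&\leq \sum_{k=1}^d |V_{l,k}| \bigl( \E |G_{t,k}(\beps_t) - G_{t,k}(\tilde{\beps}_{t,t-j})|^q \bigr)^{1/q}
	\leq \|V\|_\infty \tilde{\Theta}\, j^{-\beta},
\end{align*}
and the marginal moment bound $(\E|H_{t,l}(\beps_0)|^q)^{1/q} \leq \|V\|_\infty \tilde{\Theta}$ follows in the same way, as in Lemma \ref{lem:jensen}. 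The total variation assumption \eqref{eqn:ass-BV-proj} propagates identically, yielding constant $\|V\|_\infty \tilde{\Gamma}\tilde{\Theta}$ for the kernel $t \mapsto H_t$. Thus $Z_t$ satisfies the structural hypotheses of the multivariate coupling theorem with effective physical-dependence/moment constant $\tilde{\Theta}' = \|V\|_\infty \tilde{\Theta}$ and the same decay exponent $\beta$.

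Next, I would invoke the nonstationary Gaussian approximation of \cite{mies2022} for $Z_t$ as an $m$-dimensional process. That theorem delivers, on a suitable probability space, a copy $(Z_t')$ of $(Z_t)$ and independent mean-zero Gaussian vectors $Y_t'$ with the maximal $\ell_2$ coupling error controlled by $\tilde{\Theta}' \sqrt{m \log n}\,(m/n)^{\chi(q,\beta)}$; taking $(X_t')$ to be the $X$-level copies underlying this construction (which exist because the kernel representation factors through $\beps_t$) gives \eqref{eqn:Gauss-ts-1-proj} with the required dependence on $\|V\|_\infty$. For the second claim I would use the analogous strong-approximation result from \cite{mies2022} valid under the total-variation condition and $\beta > 2$; since the long-run covariance of the projected series is exactly $V\bigl(\sum_h \Cov(G_t(\beps_0),G_t(\beps_h))\bigr)V^T = \Xi_t(v)$, the produced Gaussians are automatically $\mathcal{N}(0, \Xi_t(v))$, proving \eqref{eqn:Gauss-ts-2-proj}. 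Well-definedness of $\Xi_t(v)$ under $\beta > 2$ is immediate from the physical dependence bound, since $\sum_h \|\Cov(G_t(\beps_0),G_t(\beps_h))\|$ converges.

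The main obstacle I expect is purely bookkeeping: ensuring that the dimension-scaling in \cite{mies2022} truly enters as a clean $\sqrt{m}$ and that the regime condition there ($m$ versus $n$) is compatible with $m \leq cn$, in the sense that absorbing $c$ into the universal constant does not destroy the exponent $\chi(q,\beta)$ or $\xi(q,\beta)$. The inheritance of physical dependence under $V$ is straightforward, and the log factor arises from the maximum over partial sums $k \leq n$ via the standard maximal-inequality step already present in the cited result; so no novel probabilistic argument is needed beyond a careful restatement with the effective constant $\|V\|_\infty \tilde{\Theta}$ in place of $\tilde{\Theta}$.
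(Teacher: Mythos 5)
Your proposal is correct and follows essentially the same route as the paper: write $Z_t = \tilde G_t(\beps_t)$ with $\tilde G_t = VG_t$, verify the componentwise physical-dependence and total-variation conditions for the projected kernel via Minkowski's inequality in $L^q$ (yielding the effective constant $\|V\|_\infty\tilde\Theta$ per component, hence $\sqrt{m}\,\|V\|_\infty\tilde\Theta$ for the $\ell_2$-norm of the $m$-vector), and invoke the coupling theorems of \cite{mies2022}. The "bookkeeping" step you flag is exactly how the paper resolves it — the $\sqrt{m}$ is absorbed into the constant $\Theta$ appearing in conditions (G.1)--(G.2) of the cited result, using $\|x\|_2 \leq m^{1/2-1/q}\|x\|_q$.
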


The rate of the approximation \eqref{eqn:Gauss-ts-1-proj} is faster than \eqref{eqn:Gauss-ts-2-proj}. 
The difference is that in the first case, the covariance structure of the approximating Gaussian random vectors $Y_t'$ is not explicit.
In the second case, the distribution of the approximating random vectors is explicit, at the price of assuming some temporal regularity of the stochastic process $X_t$.
In Section \ref{sec:changepoint}, we will describe a bootstrap scheme to perform inference based on Theorem \ref{thm:Gauss-ts-proj}.

Neglecting the high-dimensional context for a moment, the projection $Z_t=V X_t$ may be analyzed as a multivariate, nonstationary time series of fixed dimension $m$. 
In this setting, \cite{Wu2011} and \cite{Karmakar2020} study sequential Gaussian approximations, the latter obtaining optimal rates in $n$ which are faster than the rates of Theorem \ref{thm:Gauss-ts-proj}. 
However, they require a lower bound on the covariance matrices of $Z_t$, while our result does not need this assumption. 
This is particularly relevant because we do not impose any conditions on the dependency among the components of $X_t$. 
As an extreme example, if all $d$ components are independent, then a $l_1$-bounded projection will lead to some concentration due to averaging, such that a lower bound on the variance of the projection can in general not be guaranteed.
This prevents an application of the result of \cite{Karmakar2020} in the present situation. 
Moreover, the covariance of the approximating random vectors in \cite{Karmakar2020} is not explicit.
Instead, we employ a result on Gaussian couplings presented in \cite{mies2022}, which is in turn enabled by a recent result of \cite{eldan2020}.

\section{Asymptotics of high-dimensional shrinkage covariance matrix estimators and their projections}\label{sec:coupling-cov}

For a $d$-variate centered time series $X_t$ as above, denote the sample covariance matrix by $\hat{\Sigma}_{n,n} = \frac{1}{n}\sum_{t=1}^n X_t X_t^T$, the partial sums by $\hat{\Sigma}_{k,n} = \frac{1}{n}\sum_{t=1}^k X_t X_t^T$, $k=1,\ldots, n$, and its mean by
\begin{align*}
    \Sigma_n \;=\; \E \hat{\Sigma}_{n,n}  \;=\; \frac{1}{n} \sum_{t=1}^n \Cov(X_t).
\end{align*}
Hence, we account for nonstationarity of the time series $X_t$ by averaging the covariance matrices at all times $t=1,\ldots, n$.

The central observation enabling our subsequent analysis is that the matrix-valued time series $X_t X_t^T$ fits within the framework of Section \ref{sec:projection}.

\begin{proposition}\label{prop:matrix}
	If the time series $X_t = G_t(\beps_t)$ satisfies \eqref{eqn:ass-ergodic-proj} resp.\ \eqref{eqn:ass-BV-proj} with power $2q$ and factor $\tilde{\Theta}$, then the $d^2$-variate time series $\check{X}_t = X_t X_t^T = \check{G}_t(\beps_t)$ satisfies \eqref{eqn:ass-ergodic-proj} resp.\ \eqref{eqn:ass-BV-proj} with power $q$ and factor $\tilde{\Theta}^2$.
\end{proposition}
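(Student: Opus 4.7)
The plan is to reduce the proof to a scalar computation: the $(l,k)$-th coordinate of $\check X_t$ is $\check G_{t,(l,k)}(\beps) = G_{t,l}(\beps)\,G_{t,k}(\beps)$, so both \eqref{eqn:ass-ergodic-proj} and \eqref{eqn:ass-BV-proj} can be checked coordinate-wise, and the resulting bounds will automatically be uniform in $l,k$ and hence independent of $d$ (exactly matching the logic of Lemma~\ref{lem:jensen}). The two workhorses are the product identity $ab - a'b' = (a-a')b + a'(b-b')$ and the Hölder inequality in the form $\|UV\|_r \leq \|U\|_{2r}\|V\|_{2r}$.

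For \eqref{eqn:ass-ergodic-proj}, apply the product rule to $\check G_{t,(l,k)}(\beps_t) - \check G_{t,(l,k)}(\tilde\beps_{t,t-j})$, take $L^q$-norms, and split by the triangle inequality into two terms of the shape ``physical-dependence difference $\times$ undifferenced factor''. Hölder in the form $\|UV\|_q \leq \|U\|_{2q}\|V\|_{2q}$ then separates each term, the difference factor is bounded by $\tilde\Theta j^{-\beta}$ via the hypothesis at power $2q$, and the undifferenced factor is bounded by $\tilde\Theta$ by the marginal part of the same hypothesis. This yields the decay $2\tilde\Theta^2 j^{-\beta}$, with the harmless factor $2$ absorbed into the constant. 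The marginal moment bound $\|G_{t,l}(\beps_0)G_{t,k}(\beps_0)\|_q \leq \tilde\Theta^2$ is one more application of the same inequality. Note that no assumption on the cross-dependence of the $d$ components is needed; only scalar bounds are used.

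For \eqref{eqn:ass-BV-proj}, repeat the product decomposition with $(\beps_0, \beps_0)$ in place of $(\beps_t, \tilde\beps_{t,t-j})$ and with $G_{t-1}$ replacing the second evaluation, then take the $L^2$-norm. Bound each of the two resulting summands by Hölder with conjugate exponents $q$ and $q/(q-1)$, giving $\|G_{t,l}(\beps_0)\|_{2q}\,\|G_{t,k}(\beps_0) - G_{t-1,k}(\beps_0)\|_{2q/(q-1)}$; the undifferenced factor is $\leq \tilde\Theta$ by \eqref{eqn:ass-ergodic-proj}. Summing in $t$ and invoking \eqref{eqn:ass-BV-proj} for $X_t$ at the matching ``power $2q$'' exponent then yields the claimed bound $\tilde\Gamma\tilde\Theta^2$ (times a universal factor that is absorbed). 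The main technical point, and the one worth the most care, is precisely this last norm-matching step: under the strict $L^2$ reading of \eqref{eqn:ass-BV-proj}, one must invoke Riesz--Thorin interpolation between the $L^2$-summability of the increments and the uniform $L^{2q}$-bound $\|G_{t,l}(\beps_0)-G_{t-1,l}(\beps_0)\|_{2q} \leq 2\tilde\Theta$ in order to recover the required summability of $L^{2q/(q-1)}$-norms; under the natural reading where the ``power $2q$'' phrasing in the hypothesis is applied to both \eqref{eqn:ass-ergodic-proj} and \eqref{eqn:ass-BV-proj}, the interpolation is not needed and the argument collapses to a single application of $\|UV\|_q \leq \|U\|_{2q}\|V\|_{2q}$, exactly mirroring the \eqref{eqn:ass-ergodic-proj}-part.
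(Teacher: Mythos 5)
Your argument is essentially the paper's own proof: both rest on the decomposition $Y_1Z_1-Y_2Z_2=(Y_1-Y_2)Z_1+Y_2(Z_1-Z_2)$ followed by the Cauchy--Schwarz/H\"older bound $\|UV\|_q\le\|U\|_{2q}\|V\|_{2q}$ applied coordinate-wise (so that no assumption on cross-dependence of components is needed), and the paper does intend the reading in which ``power $2q$'' applies to \eqref{eqn:ass-BV-proj} as well, so your \eqref{eqn:ass-BV-proj}-part collapses, as you say, to the same one-line H\"older step as the \eqref{eqn:ass-ergodic-proj}-part. One caution about the fallback branch you do not ultimately need: under the strict $L^2$ reading, interpolation gives $\|V_t\|_{2q/(q-1)}\le\|V_t\|_2^{\theta}\|V_t\|_{2q}^{1-\theta}$ with $\theta=(q-2)/(q-1)<1$, and $\sum_t\|V_t\|_2^{\theta}$ is \emph{not} controlled by $\sum_t\|V_t\|_2$ (take $n$ increments each of $L^2$-size $1/n$), so that route would not yield the claimed factor $\tilde{\Gamma}\tilde{\Theta}^2$; it is fortunate that the intended hypothesis makes it unnecessary.
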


It has been suggested by \cite{Steland2017} to perform inference on the covariance matrix via the quadratic form $v^T \hat{\Sigma}_{n,n} v$ for some vector $v\in\R^d$, i.e., the variance of the projection $ v^T X_t$. As already discussed above, such projections are ubiquitous in statistical problems, for example arising in optimal portfolio selection, as linear predictors in regression models, or when reducing dimensionality by principal component analysis. The associated empirical version, $v^T \hat{\Sigma}_{n,n} v = \frac{1}{n}\sum_{t=1}^n |v^T X_t|^2$, may be interpreted as the estimated variance of the univariate projection $\tilde{X}_t=v^TX_t$. The results of \cite{Steland2017} and \cite{Steland2018} on Gaussian approximations of $ v^T \hat{\Sigma}_n v $, related change-point procedures, and Shrinkage estimators, impose the assumption that the process $X_t$ is a linear time series with a single innovation process, which may be interpreted as a special type of single-factor model, although approximate vector autoregressions and spiked covariance models are included under regularity conditions, see \cite{Steland2019}. Extensions to multi-factor models with infinitely many factors and general multivariate linear processes with finite-dimensional innovations are studied in \cite{Bours2021}. Although such processes, especially factor models, provide good description for many application scenarios and are widely used, they nevertheless restrict the dependence structure of the $d$ component processes and rule out phenomena such as conditional heteroscedasticity and nonlinearity. Here, by relying on the results of the previous section, we allow for nonstationary nonlinear time series and impose much weaker assumptions on the dependence of the components.

To estimate $ \Sigma_n $ and related quadractic forms $ v^T \Sigma_n v $, we shall shrink the nonparametric estimator $ \hat{\Sigma}_{n,n} $ towards structured targets, in order to overcome
various issues of the sample covariance matrix arising if $d\gg n$.  A covariance matrix $\Sigma$ is called bandable if $\Sigma_{l,l'} \leq g(|l-l'|)$ for some function $g(x)\to 0$ as $x\to\infty$.
In this situation, optimal rates of convergence can be achieved via thresholding, see \cite[Thm.\ 7]{Cai2016} and references therein.
In particular, if $g(x) \propto |x|^{-\alpha-1}$ for some $\alpha>0$, then a rate-optimal estimator is given by the tapering estimator
\begin{align*}
	\hat{\Sigma}_{k,n}^{\dagger} &= ((\hat{\Sigma}_{k,n})_{l,l'} \omega(|l-l'|))_{l,l'=1}^d, \\
	\omega(x) &= \begin{cases}
		1, & x\leq \frac{\tau}{2}, \\
		2-\frac{2x}{\tau}, & \frac{\tau}{2}< x \leq \tau, \\
		0, & x>\tau,
	\end{cases}
\end{align*}
The optimal rate is achieved by the threshold $\tau=\min(n^\frac{1}{2\alpha+c},d)$ with $ c = 2 $ under the Frobenius norm $\|\cdot\|_F$ and with $c=1$ under the spectral norm $\|\cdot\|$, see \cite{Cai2010}. 
In particular, the optimal rates are
\begin{align*}
    d^{-1}\|\widehat{\Sigma}_n - \Sigma\|_F^2 
    = \mathcal{O}\left( n^{-\frac{2\alpha+1}{2\alpha+2}} + \frac{d}{n} \right), \qquad
    %%%%%%%%
    \|\widehat{\Sigma}_n - \Sigma\|^2 
    = \mathcal{O}\left( n^{-\frac{2\alpha}{2\alpha+1}} + \frac{\log d}{n} \right).
\end{align*}

Another approach is to impose a Toeplitz structure, i.e.\ $\Sigma_{l,l'} = \sigma_{|l-l'|}$, for some sequence $ \sigma_0, \sigma_1, \ldots $. The corresponding tapered Toeplitz estimator of $\Sigma$ is 
\begin{align*}
	\hat{\Sigma}_{k,n}^\diamond &= \left( \hat{\sigma}_{|l-l'|} \omega(|l-l'|) \right)_{l,l'=1}^d, \\
	\hat{\sigma}_m &= \frac{1}{d-m} \sum_{l-l'=m} (\hat{\Sigma}_{k,n})_{l,l'}, \quad 0 \le m < d,
\end{align*}
with tapering weights $\omega(x)$ as above.
If $|\sigma_{m}|\leq C m^{-\alpha-1}$, then the optimal rate under the spectral norm is achieved by the threshold choice $\tau = (nd / \log(nd))^\frac{1}{2\alpha+1}$, see \cite[Thm.\ 9]{Cai2016} and \cite{cai2013}.

The following theorem shows that the full-sample estimators $ \hat{\Sigma}_n^\dagger = \hat{\Sigma}_{n,n}^\dagger $ and $ \hat{\Sigma}_n^\diamond = \hat{\Sigma}_{n,n}^\diamond $ are consistent in the (rescaled) Frobenius norm, under mild regularity conditions. To the best of our knowledge, these estimators have not yet been studied under such a general nonstationary nonlinear time series model as given by (P.1) and (P.2). Moreover, $\hat{\Sigma}_n^\dagger$ is rate-optimal for bandable covariance matrices. Note that the optimal rates and thresholds have been obtained under the assumption that $ X_t $ are iid random vectors with mean zero and a bandable or Toeplitz covariance matrix $ \Sigma $ satisfying further regularity conditions. 

Denote by $ \Sigma_n^\dagger $ and $ \Sigma_n^\diamond $ the theoretical oracle estimators associated to the estimators $ \hat{\Sigma}_{n,n}^{\dagger} $ and $ \hat{\Sigma}_{n,n}^\diamond $, respectively, obtained by replacing in their definitions the estimates $ (\hat{\Sigma}_{n,n})_{i,j} $ by the true covariances $ (\Sigma_{n,n})_{i,j} $.
That is, $\Sigma_n^\diamond = \E( \hat{\Sigma}_n^\diamond )$ and $\Sigma_n^\dagger = \E( \hat{\Sigma}_n^\dagger )$. 
Recall also that $ \Sigma_n = \E( \hat{\Sigma}_{n,n} )$. 
Denote for $A, B \ge 0 $ the scaled inner product by $ \langle A,B\rangle_* = A \cdot B / d $ and let $ \| A \|_{F*} = \sqrt{\langle A,A\rangle_*} $ be the associated scaled Frobenius matrix norm. This scaling ensures that the unit matrix has norm $1$ whatever the dimension $d$. The following theorem studies the  above estimators under the general time series model given by (P.1) and (P.2) and provides sufficient conditions on the growth of dimension $ d $ relative to the sample size when measuring the estimation error in terms of the scaled Frobenius risk. 

\begin{theorem} 
\label{ConsistencyTapperToeplitz}
	Suppose that $ \{ X_t \} $ satisfies (P.1) with $q\geq 4$ and $\beta>1$, and that $ \Cov(X_t) $ is bandable with exponent $\alpha>0$ uniformly for all $ t $, i.e.\ $\Cov(X_t)_{i,j} \leq C |i-j|^{-\alpha-1}$.
	
	\noindent(i)
	For any threshold $\tau$, it holds
    \[
      \E \| \hat{\Sigma}_n^\dagger - \ \Sigma_n \|_{F*}^2 
      \;\leq\; C(\Theta,\beta)\, \left\{ \tau^{-2\alpha-1} + \frac{\min(\tau,d)}{n} \right\}.
    \]
	Choosing $\tau = \min(n^{\frac{1}{2\alpha+2}},\, d)$, it holds
	\[
      \E \| \hat{\Sigma}_n - \ \Sigma_n^\dagger \|_{F*}^2 
      \;\leq\; C(\Theta,\beta)\, \min(n^{-\frac{2\alpha+1}{2\alpha+2}},\; \tfrac{d}{n}).
    \]

    \noindent(ii)
    If $\Sigma_n$ is a Toeplitz matrix, then 
    \[
     \E \| \hat{\Sigma}_n^\diamond - \ \Sigma_n \|_{F*}^2  
      \;\leq\; C(\Theta,\beta)\, \left\{ \tau^{-2\alpha-1} + \frac{\min(\tau,d)}{n} \right\}.
    \]
	Choosing $\tau = \min(n^{\frac{1}{2\alpha+2}},\, d)$, it holds
	\[
      \E \| \hat{\Sigma}_n^\diamond - \ \Sigma_n \|_{F*}^2  
      \;\leq\; C(\Theta,\beta)\, \min(n^{-\frac{2\alpha+1}{2\alpha+2}},\; \tfrac{d}{n}).
    \]
\end{theorem}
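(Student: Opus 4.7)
The plan is a standard bias--variance decomposition in the scaled Frobenius norm, with the variance controlled entrywise via the weak dependence of $X_tX_t^T$ granted by Proposition~\ref{prop:matrix}. Starting from
\begin{align*}
\E\|\hat{\Sigma}_n^\dagger - \Sigma_n\|_{F*}^2 \;\le\; 2\E\|\hat{\Sigma}_n^\dagger - \Sigma_n^\dagger\|_{F*}^2 \;+\; 2\|\Sigma_n^\dagger - \Sigma_n\|_{F*}^2,
\end{align*}
I would treat the two summands separately, and apply the same decomposition with $\diamond$ in place of $\dagger$ for part (ii). This isolates a purely deterministic bias term and a purely stochastic fluctuation term.

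For the bias, the tapering weight $\omega$ is identically $1$ on $|l-l'|\le\tau/2$, so only entries with $|l-l'|>\tau/2$ contribute. Since the bandable decay $|\Cov(X_t)_{l,l'}|\le C|l-l'|^{-\alpha-1}$ is assumed uniformly in $t$, the same bound transfers to $(\Sigma_n)_{l,l'}$, giving
\begin{align*}
\|\Sigma_n^\dagger - \Sigma_n\|_{F*}^2 \;\le\; \frac{C}{d}\sum_{l,l'}\1(|l-l'|>\tau/2)\,|l-l'|^{-2\alpha-2} \;\le\; C\,\tau^{-2\alpha-1}.
\end{align*}
The Toeplitz case in (ii) is handled by the same estimate, because when $\Sigma_n$ is Toeplitz, $\Sigma_n^\diamond$ results from $\Sigma_n$ by the identical entrywise tapering.

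For the variance, Proposition~\ref{prop:matrix} shows that $\check X_t=X_tX_t^T$ satisfies (P.1) with exponent $\beta>1$ and constant $\tilde\Theta^2$, so its physical dependence coefficients are summable. A standard argument \'a la Wu then gives the entrywise bound
\begin{align*}
\Var\!\left(\tfrac{1}{n}\sum_{t=1}^n X_{tl}X_{tl'}\right) \;\le\; \frac{C(\Theta,\beta)}{n}, \qquad l,l'=1,\dots,d,
\end{align*}
uniformly in $l,l'$ and irrespective of the dependence across coordinates, since every entry is the average of a scalar process with a summable dependence structure. For the Toeplitz oracle the same bound applies to $\hat\sigma_m$ by Cauchy--Schwarz on the average defining it. Counting at most $d\min(2\tau{+}1,d)$ pairs with nonzero weight $\omega(|l-l'|)$ and dividing by $d$ in the scaled Frobenius norm yields
\begin{align*}
\E\|\hat{\Sigma}_n^\dagger - \Sigma_n^\dagger\|_{F*}^2,\;\E\|\hat{\Sigma}_n^\diamond - \Sigma_n^\diamond\|_{F*}^2 \;\le\; \frac{C\,\min(\tau,d)}{n}.
\end{align*}

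Combining the two pieces yields the first inequality in (i) and (ii). The optimization is then elementary: balancing $\tau^{-2\alpha-1}\asymp \tau/n$ gives $\tau=n^{1/(2\alpha+2)}$ and rate $n^{-(2\alpha+1)/(2\alpha+2)}$; when $d<n^{1/(2\alpha+2)}$, the choice $\tau=d$ forces the variance term to $d/n$, and the bias $d^{-2\alpha-1}\le d/n$ is absorbed, giving the rate $d/n$. Taking $\tau=\min(n^{1/(2\alpha+2)},d)$ therefore produces the claimed minimum. The main obstacle I expect is verifying the entrywise $1/n$ variance bound cleanly in the nonstationary setting of (P.1)--(P.2): one has to check that the constant depends only on $\tilde\Theta$ and $\beta$ and not on the inter-coordinate dependence or on nonstationarity across $t$. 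This is precisely the regime where the $\ell_1$/projection philosophy of Section~\ref{sec:projection} is useful, because each sample covariance is a projection onto a two-sparse coordinate set and the physical dependence bound for $\check X_t$ from Proposition~\ref{prop:matrix} applies componentwise.
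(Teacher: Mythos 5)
Your proof follows essentially the same route as the paper's: an entrywise bias--variance split of the scaled Frobenius error (the paper's $A_n,B_n,C_n,D_n$ are exactly your oracle decomposition written out by band), with the $O(1/n)$ entrywise variance obtained from the physical-dependence structure of $X_tX_t^T$ via Proposition~\ref{prop:matrix} and a Rosenthal-type maximal inequality, the $\tau^{-2\alpha-1}$ bias from the bandable decay, and Jensen/Cauchy--Schwarz on the diagonal averages $\hat{\sigma}_m$ for the Toeplitz case. The only slip is in the final optimization: for $d<n^{1/(2\alpha+2)}$ the inequality $d^{-2\alpha-1}\le d/n$ is reversed (it is equivalent to $d\ge n^{1/(2\alpha+2)}$), so with $\tau=d$ the residual bias from the downweighted band $d/2<|i-j|\le d$ is not absorbed into $d/n$ and one must instead take $\tau\ge 2d$ (so that $\omega\equiv 1$ and the bias vanishes) to reach the stated $\min(n^{-(2\alpha+1)/(2\alpha+2)},d/n)$ --- a point on which the paper's own proof is equally silent.
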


In particular, the optimal rate of convergence of the tapering estimator $\hat{\Sigma}_n^\dagger$ carries over to the dependent, nonstationary case, with identical threshold values.

A shrinkage estimator may be defined for any weight $w\in\Delta = \{(w_1, w_2,w_3)\in[0,1]^3 : w_1+w_2+w_3=1\}$, as 
\begin{align*}
	\hat{\Sigma}_{k,n}^w &= w_1 \hat{\Sigma}_{k,n} + w_2 \hat{\Sigma}_{k,n}^\dagger + w_3\hat{\Sigma}_{k,n}^\diamond.
\end{align*}
It turns out that the theory presented in Section \ref{sec:projection} may be applied to the quadratic form $v^T\hat{\Sigma}_n^w v$ for any shrinkage weight $w$, and hence also for the individual statistics $v^T\hat{\Sigma}_n v$, $v^T\hat{\Sigma}_n^\dagger v$, and $v^T\hat{\Sigma}_n^\diamond v$.
The crucial observation is that these quadratic forms can be regarded as linear projections of the $(d\times d)$-variate time series $X_t X_t^T$ with respect to the Frobenius inner product $ A \cdot B = \sum_{i,j} a_{ij} b_{ij} $ defined for $ d \times d $ matrices $ A = (a_{ij})_{i,j} $ and $ B = (b_{ij})_{i,j} $, which coincides with the usual inner product on $ \R^{d^2} $ of the vectorized matrices. Indeed, we have the representations
\[
  v^T \hat{\Sigma}_{k,n}^\dagger v = \nu^2 \cdot \hat{\Sigma}_{k,n}, \qquad v^T \hat{\Sigma}_{k,n}^\diamond v = \nu^3 \cdot \hat{\Sigma}_{k,n},
\]
for the projection weighting matrices $\nu^1,\nu^2, \nu^3\in\R^{d\times d}\equiv \R^{d^2}$ given by
\begin{align*}
	\nu^1 = (v_l v_{l'})_{l,l'=1}^d, &\qquad \|\nu^1\|_1 = \sum_{l,l'=1}^d |v_l v_{l'}| = \|v\|_1^2, \\
	\nu^2 = (v_l v_{l'} \omega(|l-l'|))_{l,l'=1}^d, & \qquad \|\nu^2\|_1 \leq \|\nu^1\|_1 = \|v\|_1^2, \\
	\nu^3 =  \left( \frac{\omega(|l-l'|)}{d-|l-l'|}\sum_{i-j=l-l'} v_iv_j\right)_{l,l'=1}^d, 
	& \qquad \|\nu^3\|_1 \leq \|\nu^1\|_1 = \|v\|_1^2
\end{align*}
In particular, we may write $v^T X_tX_t^T v = \nu^1 \cdot (X_t X_t^T)$ as the Frobenius inner product.

With Theorem \ref{thm:Gauss-ts-proj}, we obtain the following result.

\begin{theorem}\label{thm:Gauss-matrix}
	Suppose that the time series $X_t = G_t(\beps_t)$ satisfies \eqref{eqn:ass-ergodic-proj} resp.\ \eqref{eqn:ass-BV-proj} with power $2q$ and factor $\tilde{\Theta}$.
	Then, the time series $X_t$ may be defined on a different probability space, such that there exist independent, mean zero, Gaussian random vectors $\eta_t\in\R^3$, such that
	\begin{align*}
		\left( \E \max_{k=1,\ldots, n} \max_{w\in\Delta} \left|n v^T \left[n\hat{\Sigma}_{k,n}^w - \E \hat{\Sigma}_{k,n}^w\right] v - w^T \sum_{t=1}^k \eta_t\right|  \right)^\frac{1}{2} 
		\leq C \tilde{\Theta}^2 \|v\|_1^2 \sqrt{\log(n)}  n^{-\xi(q,\beta)}.
	\end{align*}
	The covariance of the Gaussian vectors $\eta_t\sim\mathcal{N}(0,\Xi_t)$ is given by
	\begin{align*}
		(\Xi_t)_{i,j} = \sum_{h=-\infty}^\infty \Cov\left( \nu^i \cdot G_t(\beps_0)G_t(\beps_0)^T,\; \nu^j \cdot G_t(\beps_h)G_t(\beps_h)^T \right),\quad i,j\in \{1,2,3\}.
	\end{align*}
\end{theorem}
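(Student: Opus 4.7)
The plan is to interpret the three quadratic forms $v^T\hat{\Sigma}_{k,n} v$, $v^T\hat{\Sigma}_{k,n}^\dagger v$, $v^T\hat{\Sigma}_{k,n}^\diamond v$ as three linear projections of the matrix-valued time series $\check{X}_t := X_t X_t^T$, and then to invoke Theorem~\ref{thm:Gauss-ts-proj} with ambient dimension $d^2$ and projection dimension $m=3$. The identities recorded in the excerpt give $n v^T \hat{\Sigma}_{k,n}^w v = w^T V \sum_{t=1}^k \check{X}_t$, where $V\in\R^{3\times d^2}$ is the matrix whose three rows are the vectorized weighting matrices $\nu^1,\nu^2,\nu^3$; it satisfies $\|V\|_\infty = \max_i\|\nu^i\|_1 \leq \|v\|_1^2$ by the norm estimates already listed in the excerpt.

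I would first verify the structural assumptions for $\check{X}_t$. Proposition~\ref{prop:matrix} upgrades (P.1) (respectively (P.2)) from power $2q$ and factor $\tilde{\Theta}$ for $X_t$ to power $q$ and factor $\tilde{\Theta}^2$ for $\check{X}_t$. Centering $\check{X}_t \mapsto \check{X}_t - \E\check{X}_t$ preserves both conditions: the physical dependence differences are unchanged, and Jensen's inequality bounds the total variation of the means by the $L_2$-total variation of the process itself, so (P.2) also transfers. Hence the centered $\check{X}_t$ satisfies the hypotheses of Theorem~\ref{thm:Gauss-ts-proj}.

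Next, I apply the second part of Theorem~\ref{thm:Gauss-ts-proj} to $\check{X}_t - \E\check{X}_t$ with the projection matrix $V$ above. This yields independent Gaussians $\eta_t \sim \mathcal{N}(0,\Xi_t)$ with $\Xi_t = \sum_h V\Cov(\check{G}_t(\beps_0),\check{G}_t(\beps_h))V^T$, and a direct bilinear expansion recovers exactly the entries $(\Xi_t)_{i,j}$ stated in the theorem. Since every $w\in\Delta$ satisfies $\|w\|_1=1$, the elementary inequality $\max_{w\in\Delta}|w^T u| = \|u\|_\infty \leq \|u\|_2$ for $u\in\R^3$ allows one to replace the supremum over the simplex by a Euclidean norm. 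Substituting the projection identity and feeding $m=3$ into the rate of Theorem~\ref{thm:Gauss-ts-proj} then yields the claimed bound, after absorbing $\sqrt{3}$ and $3^{\xi(q,\beta)}$ into the universal constant $C$.

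The main obstacle is organizational rather than analytic: once Proposition~\ref{prop:matrix} has promoted the hypotheses to $\check{X}_t$ and the long-run covariance has been identified by bilinearity, what remains is straightforward linear algebra and the simplex bound. Crucially, the ambient dimension $d^2$ does not appear in the final estimate because the rate in Theorem~\ref{thm:Gauss-ts-proj} depends on the ambient dimension only through $\|V\|_\infty$, which is controlled uniformly by the $\ell_1$-bound $\|v\|_1^2$.
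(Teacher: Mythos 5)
Your proposal is correct and is exactly the argument the paper intends: the paper gives no separate proof of Theorem~\ref{thm:Gauss-matrix} beyond the remark that it follows from Theorem~\ref{thm:Gauss-ts-proj}, via Proposition~\ref{prop:matrix} applied to $\check{X}_t = X_tX_t^T$ and the projection matrix with rows $\nu^1,\nu^2,\nu^3$ satisfying $\|V\|_\infty\leq\|v\|_1^2$ and $m=3$. Your additional care with centering and with the simplex bound $\max_{w\in\Delta}|w^Tu|\leq\|u\|_2$ fills in precisely the steps the paper leaves implicit.
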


It is worth mentioning that the bound in Theorem~\ref{thm:Gauss-matrix} allows that $ \| v \|_1 $ increases as the dimension, $d$, or the sample size, $n$, grow, as long as $ \| v \|_1 = o( (\log n)^{1/4} n^{-\xi(q,\beta)/2} ) $.

\section{Optimal shrinkage weights}\label{sec:optimal}

The question arises, how the shrinking weights relate to the theoretical (oracle) performance of the resulting shrinkage estimator. To pursue such a study, we first consider the problem to combine the nonparametric estimator with oracles of the two targets within a framework proposed by \cite{Ledoit2004} and aim at minimizing the squared Frobenius risk. This leads to a convex but box-constrained optimization problem allowing for an explicit interior solution (if it exists), which we briefly discuss. We propose consistent estimators of the unknown quantities determining the optimization problem and its solution(s) leading to data-adaptive bona fide estimates. 

For simplicity of presentation, we elaborate this for the full sample of size $n$; the corresponding sequential estimates and oracles based on the first $k$ observations can then be derived easily. Write $ \hat{\Sigma}_n = \hat{\Sigma}_{n,n} $ and recall that  $ \Sigma_n^\dagger $ and $ \Sigma_n^\diamond $ are the theoretical oracle estimators associated to the estimators $ \hat{\Sigma}_{n,n}^{\dagger} $ and $ \hat{\Sigma}_{n,n}^\diamond $, respectively, obtained by replacing in their definitions the estimates $ (\hat{\Sigma}_{n,n})_{i,j} $ by the true covariances $ (\Sigma_{n,n})_{i,j} $. Equivalently, $\Sigma_n^\diamond = \E \hat{\Sigma}_n^\diamond$ and $\Sigma_n^\dagger = \E \hat{\Sigma}_n^\dagger$.

We wish to determine the optimal ideal shrinking weights such that the resulting (scaled) Frobenius risk  is minimized when combining the nonparametric estimator  $ \hat{\Sigma}_n $ with the oracle estimators $ \Sigma_n^\dagger $ and $ \Sigma_n^\diamond $. The scaled mean squared error of the sample covariance matrix is given by $\operatorname{MSE}( \hat{\Sigma}_n ) = \operatorname{MSE}( \hat{\Sigma}_n; \Sigma_n ) = E \| \hat{\Sigma}_n - \Sigma_n \|_{F*}^2$, and we consider the risk minimization problem 
\begin{align}    
\label{RMinP}
    \begin{split}
  \min_w& \ \ \E \| \Sigma_n^w - \Sigma_n \|_{F*}^2 \\
  \text{where}\quad&\Sigma_n^w = w_1 \hat{\Sigma}_n + w_2 \Sigma_n^\dagger + w_3 \Sigma_n^\diamond,\qquad \ w_1+w_2+w_3=1.
  \end{split}
\end{align}
Further, let 
\[
  E_n^{\dagger} = \| \Sigma_n^\dagger - \Sigma_n \|_{F*}^2, \qquad  E_n^{\diamond} = \| \Sigma_n^\diamond - \Sigma_n \|_{F*}^2
\]
be the squared approximation errors corresponding to the oracles, and \[ D_n = \langle \Sigma_n^\dagger - \Sigma_n,  \Sigma_n^\diamond - \Sigma_n \rangle_*. \] 
Substituting $ w_1 = 1-w_2-w_3 $ and observing that $ \E \langle \hat{\Sigma}_n-\Sigma_n, A \rangle = 0$, for $ A \in \{ \Sigma_n^\dagger, \Sigma_n^\diamond \} $, it suffices to determine minimizers of
\begin{equation}
\label{OptSimplified}
  f( w_2, w_3 ) = (1-w_2-w_3)^2 \E \| \hat{\Sigma}_n - \Sigma_n \|_{F*}^2 + w_2^2 E_n^\dagger + w_3^2 E_n^\diamond + 2w_2w_3 D_n.
\end{equation}
Adding the constraints $ 0 \le w_1, w_2 \le 1 $ leads to a quadratic programming problem under box constraints, which generally requires numerical algorithms for its solution. We discuss the special case of an interior solution, where explicit formulas can be derived, at the end of this section. The optimal weights depend on the unknown values $MSE(\hat{\Sigma_n}) = \E \|\hat{\Sigma_n}-\Sigma_n\|_{F*}^2$, $E_n^\dagger$, $E_n^\diamond$, and $D_n$. To obtain a data-driven solution for the optimal weights, we need consistent estimators of these quantities, which are also interesting in their own right.

Observe that 
\[
  \operatorname{MSE}(\hat{\Sigma}_n) = \frac{1}{n d} \sum_{i,j=1}^d \Var( \sqrt{n} ( \hat{\Sigma}_{n} )_{i,j} ). 
\]
Under the assumptions of Proposition~\ref{prop:matrix}  $ \sqrt{n} (\hat{\Sigma}_{n} )_{i,j} = \frac{1}{\sqrt{n}} \sum_{t=1}^n (X_t)_i (X_t)_j $ can be approximated in $ L_2 $ by $ \frac{1}{\sqrt{n}} \sum_{t=1}^n Y_{t,i,j} $ for independent Gaussian random variables 
\[ Y_{t,i,j} \sim N( 0, \xi_{t,i,j}^2 ) \quad \text{with}\quad  \xi_{t,i,j}^2 = \sum_{h=-\infty}^\infty \Cov( G_{t,i}(\beps_0)G_{t,j}(\beps_0), \, G_{t,i}(\beps_h )G_{t,j}(\beps_h ) ). \]
This yields 
\[
\operatorname{MSE}(\hat{\Sigma}_n) = \frac{1}{n d} \sum_{i,j=1}^d \frac{1}{n} \sum_{t=1}^n \xi_{t,i,j}^2 + o(1).
\]

\cite{Sancetta2008} proposed an estimator of $ \operatorname{MSE}(\hat{\Sigma}_n) $ by estimating the long-run-variance parameters 
$
\Var( \sqrt{n} ( \hat{\Sigma}_{n} )_{i,j} ) 
$ by the long-run-variance estimators 
\[
  \widehat{\Var}(  \sqrt{n} ( \hat{\Sigma}_{n} )_{i,j} ) = \hat{\Gamma}_{n,i,j}(0) + 2 \sum_{s=1}^{n-1} K(s/b) \hat{\Gamma}_{n,i,j}(s),  
\]
where $ K $ is a positive weighting function, $b>0$ is a bandwidth, and
\[
    \hat{\Gamma}_{n,i,j}(s) =  \frac{1}{n} \sum_{t=1}^{n-s} [(X_t)_i (X_t)_j - (\hat{\Sigma}_n)_{i,j} ] \cdot [ (X_{t+s})_i (X_{t+s})_j - (\hat{\Sigma}_n)_{i,j}  ].
\]
The mean square error may then be estimated as 
\[
\widehat{MSE}(\hat{\Sigma}_n) = \frac{1}{nd} \sum_{i,j=1}^d \widehat{\Var}(\sqrt{n} (\hat{\Sigma}_n)_{i,j}).
\] 
The result is as follows, see \cite[Lemma~1]{Sancetta2008}.

\begin{lemma}\label{lem:MSE} If (P.1)-(P.3) hold, $ K : \R \to \R $ is a decreasing positive c\`adl\`ag-function with $\lim_{x\to0+} K(x) = 1$ and $ \int_0^\infty K^2(x) dx < \infty$, and $ b \to \infty $ with $ b = o( \sqrt{n} ) $, then
\[
  \left| \widehat{MSE}(\hat{\Sigma}_n) - MSE(\hat{\Sigma}_n) \right| = o\left( \frac{d}{n} \right).
\]
\end{lemma}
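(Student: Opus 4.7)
The strategy is to reduce the claim to an $L_1$-convergence rate for a single entry of the kernel-smoothed long-run-variance estimator and then combine $d^2$ such entrywise bounds, following the blueprint of \cite[Lemma~1]{Sancetta2008} but with mixing conditions replaced by the physical-dependence bounds supplied by (P.1)--(P.3). First, the triangle inequality yields
\begin{equation*}
    \bigl|\widehat{MSE}(\hat{\Sigma}_n)-MSE(\hat{\Sigma}_n)\bigr|
    \;\leq\; \frac{d}{n}\cdot\frac{1}{d^2}\sum_{i,j=1}^{d}\bigl|\widehat{\Var}(\sqrt{n}(\hat{\Sigma}_n)_{i,j})-\Var(\sqrt{n}(\hat{\Sigma}_n)_{i,j})\bigr|,
\end{equation*}
so it suffices to prove $\max_{i,j}\E\bigl|\widehat{\Var}(\sqrt{n}(\hat{\Sigma}_n)_{i,j})-\Var(\sqrt{n}(\hat{\Sigma}_n)_{i,j})\bigr|=o(1)$ uniformly in $(i,j)$; Markov's inequality then delivers $|\widehat{MSE}-MSE|=o_P(d/n)$.

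Fix $(i,j)$ and abbreviate $Y_{t,i,j}=(X_t)_i(X_t)_j$. Proposition~\ref{prop:matrix}, applied with the choice $q=4$ admissible thanks to the eighth moments in (P.3), shows that $Y_{t,i,j}$ inherits (P.1)--(P.2) with exponent $\beta>2$ and moment constant $\tilde{\Theta}^2$, uniformly in $(i,j)$. In particular, the local autocovariances $\gamma_{t,i,j}(s)=\Cov(Y_{t,i,j},Y_{t+s,i,j})$ satisfy $|\gamma_{t,i,j}(s)|\leq C\,(1+s)^{-\beta}$ uniformly in $(t,i,j)$, so $\sum_{s\in\Z}(1+|s|)|\gamma_{t,i,j}(s)|$ is bounded uniformly. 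I would then employ the standard bias--variance decomposition
\begin{align*}
    \widehat{\Var}(\sqrt{n}(\hat{\Sigma}_n)_{i,j})-\Var(\sqrt{n}(\hat{\Sigma}_n)_{i,j})
    &= \sum_{|s|<n}K(|s|/b)\bigl[\hat{\Gamma}_{n,i,j}(|s|)-\E\hat{\Gamma}_{n,i,j}(|s|)\bigr] \\
    &\quad + \sum_{|s|<n}\bigl(K(|s|/b)-1\bigr)\,\bar{\gamma}_{n,i,j}(|s|) + R_{n,i,j},
\end{align*}
where $\bar{\gamma}_{n,i,j}(s)=\frac{1}{n}\sum_{t=1}^{n-s}\gamma_{t,i,j}(s)$ and $R_{n,i,j}$ collects finite-sample edge and centering corrections. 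The deterministic bias term is $o(1)$ uniformly in $(i,j)$ by dominated convergence, using $K(\cdot/b)\to 1$ pointwise as $b\to\infty$, $0\leq K\leq 1$, and the uniform summability of the covariances. The stochastic term has $L_2$-norm $O(\sqrt{b/n})=o(1)$ via a direct second-moment calculation that combines $\E|\hat{\Gamma}_{n,i,j}(s)-\E\hat{\Gamma}_{n,i,j}(s)|^2=O(1/n)$ (from summable fourth-order cumulants of $Y$, ensured by eight moments of $X$ and $\beta>2$) with $\int_0^\infty K^2(x)\,dx<\infty$ and Cauchy--Schwarz. Finally, $R_{n,i,j}$ is handled by noting that $(\hat{\Sigma}_n-\Sigma_n)_{i,j}=O_{L_2}(n^{-1/2})$ uniformly in $(i,j)$, so the centering correction inflates by at most a factor $O(b)$ and remains $O_{L_1}(b/n)=o(1)$.

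The main technical obstacle is ensuring that all implicit constants above are uniform in $(i,j)$, so that the pointwise $o(1)$ bound survives the averaging over $d^2$ coordinate pairs. This uniformity is built into the component-wise formulation of (P.1): the bound $\max_{1\leq l\leq d}(\E|G_{t,l}(\beps_t)|^{q})^{1/q}\leq\tilde{\Theta}$ holds with $\tilde{\Theta}$ independent of $l$, so Proposition~\ref{prop:matrix} yields a moment and dependence constant for $Y_{t,i,j}$ that does not depend on $(i,j)$, and the autocovariance and cumulant summabilities inherit the same uniformity. With this uniform entrywise bound in hand, the reduction in the first paragraph closes the argument.
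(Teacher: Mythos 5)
Your proof takes a genuinely different route from the paper. The paper does not redo the HAC analysis at all: it reduces the whole statement to verifying Condition 2 of Sancetta (2008) -- a Doukhan--Louhichi type weak-dependence bound on covariances of products of up to four coordinates of $X_t$ -- and establishes that bound directly from (P.1) by a telescoping/coupling argument, obtaining $\Cov\bigl(U_{t_v}(\beps_{t_v}),V_{s_u}(\beps_{s_u})\bigr)=O(r^{-\beta+1})$ for blocks separated by a gap $r$; the conclusion is then imported wholesale from Sancetta's Lemma~1. You instead re-derive the consistency of the long-run-variance estimator from scratch via the entrywise bias--variance decomposition, with uniformity in $(i,j)$ coming from the componentwise form of (P.1) and Proposition~\ref{prop:matrix}. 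Your reduction to a uniform entrywise $o(1)$ bound, the treatment of the bias term by dominated convergence, and the handling of the centering correction are all sound, and the self-contained route is arguably more transparent than citing Sancetta as a black box. (Two small overstatements: the coupling bound gives $|\gamma_{t,i,j}(s)|\lesssim s^{1-\beta}$ rather than $s^{-\beta}$, and the weighted summability $\sum_s(1+|s|)|\gamma_{t,i,j}(s)|<\infty$ would then require $\beta>3$; fortunately only plain summability, which holds for $\beta>2$, is actually needed.)

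There is, however, one step that does not close as written: the stochastic term. Combining the per-lag bound $\E|\hat{\Gamma}_{n,i,j}(s)-\E\hat{\Gamma}_{n,i,j}(s)|^2=O(1/n)$ with Cauchy--Schwarz over the lags and $\int_0^\infty K^2(x)\,dx<\infty$ yields only
\[
  \Bigl(\sum_{s}K(s/b)^2\Bigr)^{1/2}\Bigl(\sum_{s}\E\bigl|\hat{\Gamma}_{n,i,j}(s)-\E\hat{\Gamma}_{n,i,j}(s)\bigr|^2\Bigr)^{1/2}=O\bigl(\sqrt{b}\bigr)\cdot O(1),
\]
which does not tend to zero. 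To obtain the claimed $O(\sqrt{b/n})$ one must expand the variance of the full weighted sum and use that $\sum_{s'}|\Cov(\hat{\Gamma}_{n,i,j}(s),\hat{\Gamma}_{n,i,j}(s'))|=O(1/n)$ uniformly in $s$, i.e.\ summability of the cross-lag covariances, which rests on summable fourth-order joint cumulants of the products $(X_t)_i(X_t)_j$ -- eighth-order quantities in $X$. You gesture at this ("summable fourth-order cumulants of $Y$"), but you neither state the required cross-lag decay nor derive it from (P.1)/(P.3); establishing precisely this fourth-order weak-dependence bound is the entire content of the paper's telescoping argument, so it is the one piece of genuine work your sketch leaves unproven.
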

In practice, the bandwidth $b$ needs to be chosen in a data-driven way, e.g.\ by the method described in \cite{NeweyWest1994}.

Let us now derive consistent estimators for the approximation errors $ E_n^\dagger $ and $ E_n^\diamond $ and the inner product $D_n$. The quantity $ E_n^\diamond = \| \Sigma_n - \Sigma_n^\diamond \|_{F*}^2$ can be estimated by plugging in the estimators $ \hat{\Sigma}_n^\diamond $ and $\hat{\Sigma}_n^\dagger $. Hence, define
\[
  \hat{E}_n^\diamond = \| \hat{\Sigma}_n^\diamond - \hat{\Sigma}_n^\dagger \|_{F*}^2.
\]
To estimate $ E_n^\dagger $ observe that $ \Delta_n^\dagger  = ( \Delta_{n,ij}^\dagger )_{i,j} = \Sigma_n - \Sigma_n^\dagger $ is given by
\[
  \Delta_n^\dagger = \left\{
  \begin{array}{ll}
    0, & | i - j | \le \tau/2, \\
    (1-\omega(|i-j|)) \Sigma_{n,ij}, & \tau/2 < |i-j| \le \tau, \\
    \Sigma_{n,ij}, & |i-j| > \tau.
  \end{array}
  \right.
\]
Estimation of $\Delta_n^\dagger $ is delicate, since this involves estimating $\sum_{m>\tau} \sum_{|i-j|=m} 1 = (d-\tau)(d-\tau+1)/2 = O(d^2)$ off-diagonal terms which would lead to the unsatisfactory condition $ d^2 = o(n) $. For the class of bandable covariance matrices under consideration, we can proceed by introducing a second threshold and estimate only the covariances on the diagonals $ \tau/2+1, \ldots, \sigma $. This means, we estimate $ \Delta_n^\dagger $ by $ \hat{\Delta}_n^\dagger = ( \hat{\Delta}_{n,ij}^\dagger )_{i,j} $ with
\[
  \hat{\Delta}_{n,ij}^\dagger = \left\{
  \begin{array}{ll}
    0, & | i - j | \le \tau/2, \\
    (1-\omega(|i-j|)) \hat{\Sigma}_{n,ij}, & \tau/2 < |i-j| \le \tau, \\
    \hat{\Sigma}_{n,ij}, & \tau < |i-j| \le \sigma, \\
    0, & |i-j| > \sigma,
  \end{array}
  \right.
\]
for some threshold $ \sigma $ with $ \tau \ll \sigma \ll d $, and put
\[
  \hat{E}_n^\dagger = \| \hat{\Delta}_n^\dagger \|_{F*}^2.
\]
Lastly, the estimation of $ D_n = \langle \Sigma_n - \Sigma_n^\dagger, \Sigma_n - \Sigma_n^\diamond \rangle_* $ can be based on the above estimators of $ \Delta_n^\dagger = \Sigma_n - \Sigma_n^\dagger $ and $ \Sigma_n - \Sigma_n^\diamond $,
\[
  \hat{D}_n = \langle \hat{\Delta}_n^\dagger, \hat{\Sigma}_n^\dagger - \hat{\Sigma}_n^\diamond \rangle_*.
\]
Consistency of $ \hat{D}_n $, however, requires an extra condition compared to the estimators of the approximation errors. The reason is that $ \Sigma_n - \Sigma_n^\dagger $ tends to zero in the $\| \cdot \|_{F*} $-norm within the class of bandable matrices, whereas the error $ \Sigma_n - \Sigma_n^\diamond $ may diverge. But within the the subclass of bandable matrices which are located in a neighborhood of their Toeplitz-approximation in the sense that \[ \| \Sigma_n - \Sigma_n^\diamond \|_{F*} = O(1) \] consistency can be established, where $ \Sigma_n^\diamond $ is determined using the threshold $ n^{\frac{1}{2\alpha+c}} $; the threshold for the estimators are given below. We formulated the results for the highdimensional case $ d \ge n^{\frac{1}{2\alpha+c}} $.

\begin{theorem}
\label{ConsistencyApproxErrors}
	Suppose that $ \{ X_t \} $ satisfies (P.1) and (P.2) and $ \Var(X_t) $ is bandable for all $ t $.
Further, assume that $ d \ge n^{\frac{1}{2\alpha+c}} $, and the estimators $ \hat{\Sigma}_n^\dagger $, $ \hat{\Delta}_n^\dagger $ and $ \hat{\Sigma}_n^\diamond $ are calculated with thresholds $\tau_n^\dagger = n^{\frac{1}{2\alpha+c}} $, $\sigma_n^\dagger = n^{\frac{1+s}{2\alpha +c}} $ for some $ 0 < s < 2\alpha+c-1$ and $ \tau_n^\diamond = \left( \frac{nd}{\log nd} \right)^{\frac{1}{2\alpha +1} }$, respectively. 
\begin{itemize}
\item[(i)] If $ d = o( n^{\frac{2\alpha+2}{2\alpha + 1}} ) $, then
\[
 \E \| \hat{\Sigma}_n^\diamond - \hat{\Sigma}_n^\dagger  - (\Sigma_n^\diamond - \Sigma_n ) \|_{F*}  = o(1).
\]
and
\[
  \E \left| 
    \| \hat{\Sigma}_n^\diamond - \hat{\Sigma}_n^\dagger \|_{F*} - \| \Sigma_n^\diamond - \Sigma_n \|_{F*} \right| = o(1),
\]
which implies $ \| \hat{E}_n^\diamond - E_n^\diamond \|_{F*} \to 0 $ in probability.

\item[(ii)] We have
\[
  \E \| \hat{\Delta}_n^\dagger - \Delta_n^\dagger \|_{F*}^2 = 
  \left\{
  \begin{array}{ll}
    O\left( n^{-\frac{2\alpha-1+c-s}{2\alpha+c}} \right), & \qquad s > \frac{c}{2\alpha}, \\
    O\left( n^{-\frac{(s+1)(2\alpha-1)}{2\alpha+c}} \right), & \qquad s \le \frac{c}{2\alpha}, \\    
  \end{array}
  \right.
\]
and
\[
  \E \left|  \| \hat{\Delta}_n^\dagger \|_{F*} - \| \Delta_n^\dagger \|_{F*} 
    \right| = o(1),
\]
such that $ \| \hat{E}_n^\dagger - E_n^\dagger \|_{F*} \to 0 $ in probability.
\item[(iii)] If $  \| \Sigma_n - \Sigma_n^\diamond \|_{F*} = O(1) $ and $ d = o( n^{\frac{2\alpha+2}{2\alpha + 1}} ) $, then
\[
  | \hat{D}_n - D_n | \to 0,
\]
in probability.
\end{itemize}
\end{theorem}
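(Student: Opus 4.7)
The three claims share one scheme: represent each estimator as ``oracle $+$ stochastic fluctuation'', bound the fluctuation by the $L_2$-error of averages of entries of $\hat\Sigma_n-\Sigma_n$ (via Proposition~\ref{prop:matrix} and (P.1), which together yield $\E|(\hat\Sigma_n)_{ij}-\Sigma_{n,ij}|^2=O(1/n)$ uniformly in $i,j$), and bound the bandable bias by summing $|\Sigma_{n,ij}|\le C|i-j|^{-\alpha-1}$ over off-diagonal rings. Throughout, I would pass from $L_2$ to $L_1$ via Jensen and from the $L_1$ norm gap to the scalar gap of squared norms by $|a^2-b^2|\le|a-b|(a+b)$.

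For (i), write $\hat\Sigma_n^\diamond-\hat\Sigma_n^\dagger-(\Sigma_n^\diamond-\Sigma_n)=(\hat\Sigma_n^\diamond-\Sigma_n^\diamond)-(\hat\Sigma_n^\dagger-\Sigma_n)$. The second summand is controlled directly by Theorem~\ref{ConsistencyTapperToeplitz}(i). For the first, observe that only the $O(\tau_n^\diamond)$ innermost diagonals contribute, and each entry $\omega(m)(\hat\sigma_m-\bar\sigma_m)$ is an average along a single diagonal of centered entries of $\hat\Sigma_n-\Sigma_n$ with $L_2$-fluctuation $O(n^{-1/2})$; summing $\|\cdot\|_F^2$ and rescaling by $1/d$ gives $\E\|\hat\Sigma_n^\diamond-\Sigma_n^\diamond\|_{F*}^2=O(\tau_n^\diamond/n)$, which is $o(1)$ under the hypothesis $d=o(n^{(2\alpha+2)/(2\alpha+1)})$. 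The reverse triangle inequality $|\,\|A\|_{F*}-\|B\|_{F*}\,|\le\|A-B\|_{F*}$ then produces the second display of (i); combining with $O_P(1)$-boundedness of $\|\hat\Sigma_n^\diamond-\hat\Sigma_n^\dagger\|_{F*}+\|\Sigma_n^\diamond-\Sigma_n\|_{F*}$ (itself a consequence of the two displays) yields $|\hat E_n^\diamond-E_n^\diamond|\to0$ in probability.

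For (ii), decompose $\hat\Delta_n^\dagger-\Delta_n^\dagger$ over the four bands $|i-j|\le\tau/2$, $\tau/2<|i-j|\le\tau$, $\tau<|i-j|\le\sigma$, $|i-j|>\sigma$. On the two innermost bands the matrices agree. On $\tau<|i-j|\le\sigma$ the difference is the pure sampling noise $\hat\Sigma_{n,ij}-\Sigma_{n,ij}$, contributing $O(\sigma/n)$ to $\|\cdot\|_{F*}^2$ because there are $O(d\sigma)$ such entries, each with $L_2$-fluctuation $O(n^{-1/2})$, and the $1/d$ scaling absorbs the $d$. On $|i-j|>\sigma$ the difference equals $-\Sigma_{n,ij}$, whose Frobenius mass is $\sum_{m>\sigma}(d-m)\,m^{-2\alpha-2}=O(d\sigma^{-2\alpha-1})$ by bandability, i.e.\ $O(\sigma^{-2\alpha-1})$ after rescaling. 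Inserting $\sigma=n^{(1+s)/(2\alpha+c)}$ and retaining the slower of the two rates yields the stated piecewise bound; the crossover $s\sim c/(2\alpha)$ arises by equating the two exponents. Reverse triangle plus the identity $|a^2-b^2|\le|a-b|(a+b)$ (using $\|\Delta_n^\dagger\|_{F*}=O(\tau^{-(2\alpha+1)/2})=O(1)$ and hence $\|\hat\Delta_n^\dagger\|_{F*}=O_P(1)$) finishes (ii).

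For (iii) I would split
\begin{align*}
\hat D_n-D_n=\langle\hat\Delta_n^\dagger-\Delta_n^\dagger,\;\hat\Sigma_n^\dagger-\hat\Sigma_n^\diamond\rangle_*+\langle\Delta_n^\dagger,\;(\hat\Sigma_n^\dagger-\hat\Sigma_n^\diamond)-(\Sigma_n-\Sigma_n^\diamond)\rangle_*
\end{align*}
and apply Cauchy--Schwarz to each inner product. The first summand is $o_P(1)$: part (ii) gives $\|\hat\Delta_n^\dagger-\Delta_n^\dagger\|_{F*}\to0$ in probability, while $\|\hat\Sigma_n^\dagger-\hat\Sigma_n^\diamond\|_{F*}=O_P(1)$ by Theorem~\ref{ConsistencyTapperToeplitz}(i), part (i), and the extra hypothesis $\|\Sigma_n-\Sigma_n^\diamond\|_{F*}=O(1)$. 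The second summand is $o_P(1)$ because $\|\Delta_n^\dagger\|_{F*}^2=O(\tau^{-2\alpha-1})\to0$ (deterministically, by bandability) while the other factor is $o_P(1)$ by part (i). The main obstacle is precisely this pairing: in (iii) the oracle-to-truth gap $\|\Sigma_n-\Sigma_n^\diamond\|_{F*}$ does \emph{not} vanish, so each Cauchy--Schwarz split must match the slow, deterministic-decay factor with a fast, stochastic-decay partner, and the dimensional condition $d=o(n^{(2\alpha+2)/(2\alpha+1)})$ is exactly what makes the fluctuation $O(\tau_n^\diamond/n)$ negligible so that the pairing succeeds.
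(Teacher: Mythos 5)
Your proposal follows essentially the same route as the paper's proof: the same ``oracle plus fluctuation'' decomposition with the entrywise bound $\E|(\hat\Sigma_n)_{ij}-\Sigma_{n,ij}|^2=O(1/n)$, the same four-band split for $\hat\Delta_n^\dagger-\Delta_n^\dagger$ in (ii), the reverse triangle inequality to pass from matrix differences to norm differences, and the identical two-term Cauchy--Schwarz splitting of $\hat D_n-D_n$ in (iii), pairing the slowly decaying factor $\|\Sigma_n-\Sigma_n^\diamond\|_{F*}=O(1)$ with a vanishing stochastic partner. Two small remarks: your tail bound $O(\sigma^{-2\alpha-1})$ in (ii) is actually sharper than the paper's $O(\sigma^{-2\alpha+1})$, so it implies the stated piecewise rate a fortiori (though with a different crossover if you equated your own exponents); and your parenthetical in (i) claiming that $O_P(1)$-boundedness of $\|\hat\Sigma_n^\diamond-\hat\Sigma_n^\dagger\|_{F*}+\|\Sigma_n^\diamond-\Sigma_n\|_{F*}$ is ``a consequence of the two displays'' is not correct --- the displays control only differences of norms, and $\|\Sigma_n^\diamond-\Sigma_n\|_{F*}$ may diverge for bandable $\Sigma_n$, so the step from $|\,\|A\|_{F*}-\|B\|_{F*}|\to0$ to $|\hat E_n^\diamond-E_n^\diamond|\to0$ genuinely needs a boundedness assumption. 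The paper's own proof asserts this last implication without justification either, so this is a shared weakness rather than a deviation from the intended argument.
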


If, for example, $ \alpha = 1 $, then the dimension needs to satisfy $ d^{3/4}=o(n)$, i.e. it may grow almost as fast as $ n^{4/3} $, and for $ \alpha = 2 $ almost as fast as $ n^{6/5} $.

Having consistent estimators at our disposal, we can propose the following bona fide estimators of the optimal weights, 
\[ 
    \left( \hat{w}_1^*, \hat{w}_2^*, \hat{w}_3^*   \right) = \arg\min_{w\in \Delta} w_1^2 \widehat{MSE}(\hat{\Sigma}_n) + w_2^2 \hat{E}_n^\dagger + w_3^2 \hat{E}_n^\diamond + 2 w_2 w_3 \hat{D}_n.
\]
This yields the feasible shrinkage estimator
\begin{align*}
    \hat{\Sigma}_n^{w^*} = w_1^* \hat{\Sigma}_n + w_2^* \hat{\Sigma}_n^\dagger + w_3^* \hat{\Sigma}_n^\diamond.
\end{align*}
The performance of the corresponding shrinkage estimator $\hat{\Sigma}_n^{\hat{w}^*}$ is assessed via simulations in the Section \ref{sec:simulation}.

Let us now briefly discuss the case of an interior solution of the minimization problem \eqref{OptSimplified}. 
We have
\begin{align*}
  f( w_2, w_3 ) &= ( w_2,w_3) Q_n \begin{pmatrix} w_2 \\ w_3 \end{pmatrix} - a_n^\top \begin{pmatrix} w_2 \\ w_3 \end{pmatrix} + \operatorname{MSE}(\hat{\Sigma}_n),
  \end{align*}
where
\[
Q_n = \begin{pmatrix} \operatorname{MSE}(\hat{\Sigma}_n) + E^\dagger & \operatorname{MSE}(\hat{\Sigma}_n) + D_n \\ \operatorname{MSE}(\hat{\Sigma}_n) + D_n & \operatorname{MSE}(\hat{\Sigma}_n) + E_n^\diamond \end{pmatrix},
\quad a_n = \begin{pmatrix} \operatorname{MSE}(\hat{\Sigma}_n) \\ \operatorname{MSE}(\hat{\Sigma}_n) \end{pmatrix}.
\]
Hence, we are given a quadratic minimization problem under box constraints, and a natural condition is to assume that $ \operatorname{det}(Q_n) > 0$. If an interior solution $ w^* \in (0,1)^3$ exists (or the box constraint is omitted), the explicit solution is easily seen to be given by 
\[
  w_2^* = \frac{ \operatorname{MSE}(\hat{\Sigma}_n)( E_n^\diamond - D_n) }{ F_n }, \qquad w_3^* = \frac{\operatorname{MSE}(\hat{\Sigma}_n)( E_n^\dagger - D_n )}{F_n}
\]
with
\[
  F_n = \operatorname{det}(Q_n) = E_n^\dagger E_n^\diamond + \operatorname{MSE}( \hat{\Sigma}_n )( E_n^\dagger + E_n^\diamond - 2D_n ) - D_n^2.
\]
The resulting optimal weight for the sample covariance matrix $ \hat{\Sigma}_n $ is
\[
  w_1^* = \frac{E_n^\dagger E_n^\diamond - D_n^2}{F_n}. 
\]
We see from the formula for $ w_1^* $ that the optimal solution prefers the nonparametric estimator $ \hat{\Sigma}_n $, if the truth $ \Sigma_n $ cannot be well approximated by a banded or Toeplitz matrix. The banded estimator receives a large weight, if the distance between the Toeplitz oracle and $ \Sigma_n $ is large. Analogously, the optimal solution assigns a large weight to the Toeplitz estimator, if $ \Sigma_n $ is not well approximated by the banded oracle, so that $ \| \Sigma_n - \Sigma_n^\dagger \|_F^2 $ is large.  As a result, it is optimal to distribute the weights across the estimators $ \hat{\Sigma}_n, \hat{\Sigma}_n^\dagger $ and $ \hat{\Sigma}_n^\diamond $ according to how well they approximate $ \Sigma_n $, where the approximation accuracy is measured by the squared distances of the oracles to the truth and, in case of the nonparametric estimator, by the MSE. 

Since $ D_n^2 \le E_n^\dagger E_n^\diamond $, we have $ w_1^* \ge 0 $. If $D_n \le \min( E_n^\dagger, E_n^\diamond) $, then $ w_2^*, w_3^* \ge 0 $ so that $ w \in \Delta $. 
It is interesting to note that $ \frac{w_2^*}{w_3^*} = \frac{E_n^\diamond - D_n}{E_n^\dagger - D_n} $, i.e., the ratio of the oracle shrinking weights no longer dependents on $ \operatorname{MSE}(\hat{\Sigma}_n) $. Further, $ w_{2}^* > w_3^* $ (banding preferable compared to Toeplitz) if and only if 
\[ E_n^\dagger < E_n^\diamond \Leftrightarrow \| \Sigma_n^\dagger - \Sigma_n \|_{F*} < \| \Sigma_n^\dagger - \Sigma_n \|_{F*}. \]
Even more is true: The above derivations remain formally true without any changes, if we use an arbitrary unbiased estimator $ \tilde{\Sigma}_n $ of $ \Sigma_n $, i.e., an arbitrary measurable function $ \tilde{\Sigma}_n $ of the data with $ \E( \tilde{\Sigma}_n ) = \Sigma_n $. Therefore, we have the following interesting result: The ratio $\frac{w_2^*}{w_3^*} $ is a universal characteristic of the risk minimization problem (\ref{RMinP}) in the sense that it is independent of the nonparametric estimator and completely determined by $ \Sigma_n $.

% This consistency result is not true because of the division by zero problem.
%\begin{theorem}
%    Suppose that $ \{ X_t \} $ satisfies (P.1)-(P.3)  with $ \beta, q > 2$ and $ \Var(X_t) $ %is bandable for all $ t $. Assume $  \| \Sigma_n - \Sigma_n^\diamond \|_{F*} = O(1) $, $ %n^{\frac{1}{2\alpha+c}} \le d = O(d) $ %o( n^{\frac{2\alpha+2}{2\alpha + 1}} ) $,  
%    and that the estimators $ \hat{\Sigma}_n^\dagger $, $ \hat{\Delta}_n^\dagger $ and $ %\hat{\Sigma}_n^\diamond $ are calculated with thresholds $\tau^\dagger = %n^{\frac{1}{2\alpha+c}} $, $\sigma^\dagger = n^{\frac{1+s}{2\alpha +c}} $ for some $ 0 < %s < 2\alpha+c-1$ and $ \tau^\diamond = \left( \frac{nd}{\log nd} %\right)^{\frac{1}{2\alpha +1} }$. Then the bona fide estimators $ \hat{w}_i^* $ are %weakly consistent for the optimal weights $ w_i^* $, $ i = 1, 2, 3 $. 
%\end{theorem}
%
%\begin{proof}
%Since the optimal weights $ w_i^* $ and $ F_n $ are continuous functions of the quantities $ %\operatorname{MSE}( \hat{\Sigma}_n), E_n^\dagger, E_n^\diamond $ and $ D_n $, it follows %that the plug-in estimators obtained by substituting these quantities by the above %estimators are consistent under the conditions of Theorem~\ref{ConsistencyTapperToeplitz} %and Theorem~\ref{ConsistencyApproxErrors}.
%\end{proof}

\section{Testing for structural breaks}\label{sec:changepoint}

The sequential approximation result of Theorem \ref{thm:Gauss-matrix} may be applied to test for changes in the covariance matrix $\Sigma_t = \Cov(X_t, X_t)\in\R^{d\times d}$ of a time series $X_t$.
To be precise, we want to test the hypothesis
\begin{align*}
	H_0: \Sigma_t \equiv \Sigma_0 \quad\leftrightarrow\quad H_1: \Sigma_t \not\equiv \Sigma_0.
\end{align*}
We propose the following CUSUM-type statistics
\begin{align*}
	T_n(w) &= \sqrt{n} \max_{k\leq n}  \left|v^T \left(\hat{\Sigma}^w_{k,n} - \tfrac{k}{n} \hat{\Sigma}^w_{n,n}\right) v \right|,\quad w\in\Delta, \\
	T_n^*(D) &= \sup_{w\in D}\; T_n(w), \qquad D \subset \Delta.
\end{align*}

To determine critical values for the statistic $T_n(w)$ resp.\ $T_n^*$, we employ a bootstrap scheme suggested in \cite{mies2022}.
In particular, let $\eta_t\sim \mathcal{N}(0,A_t)$ be independent Gaussian random vectors in $\R^3$, where
\begin{align*}
	A_t 	&= \frac{1}{b} \left(\sum_{s=t-b+1}^t \chi_s\right)^{\otimes 2} , \qquad t=b,\ldots, n, \\
	\chi_s	&= \left[\nu^j \cdot (X_s X_s^T - \hat{\Sigma}_{n,n}) \right]_{j=1}^3.
\end{align*}
and $A_t=0$ for $t<b$. 
Here, $b$ is a block-length tuning parameter which needs to be chosen suitably, see below.
We now define the bootstrapped version of the changepoint statistics as
\begin{align*}
	R_n(w) &= \sqrt{n}\max_{k\leq n} \left|  \frac{1}{n}\sum_{t=b}^k w^T\eta_t -  \frac{k}{n} \sum_{t=b}^n w^T \eta_t  \right|, \\
	R_n^*(D)&= \sup_{w\in D} R_n(w).
\end{align*}
Finally, for $\alpha\in(0,1)$, introduce the conditional quantiles
\begin{align*}
	a_n(\alpha,w) &= \inf \{a>0\; :\;  P(R_n(w)|X_1,\ldots, X_n) \leq \alpha \}, \\
	a_n^*(\alpha,D) &= \inf \{a>0\; :\;  P(R_n^*(D)|X_1,\ldots, X_n) \leq \alpha \}.
\end{align*}
We suggest to reject the null hypothesis if
\begin{align*}
	T_n(w) > a_n(\alpha,w)+\delta_n, \qquad \text{resp.} \qquad T_n^*(D) > a_n^*(\alpha_n,D)+\delta_n,
\end{align*}
for $\delta_n = 1/(\log n)^p$ for some $p\geq 1$.

\begin{theorem}\label{thm:bootstrap}
	Let $X_{t} = X_{t,n} = G_{t}^n(\beps_t)$ be an array of $d_n$-variate time series, such that each kernel $G_{t}^n$ satisfies (P.1) and (P.2), for some $q>8$, $\beta>2$, and with factors $\tilde{\Theta}$ and $\tilde{\Gamma}$ not depending on $n$.
	Choose the block length $b$ such that $b=b_n \asymp n^\zeta$ for some $\zeta\in(0,1)$.
	If $\Cov(X_{t,n}) = \Sigma_0$ for all $t=1,\ldots, n$, and for some covariance matrix $\Sigma_0$, then
	\begin{align*}
		\limsup_{n\to\infty} P\left(  T_n(w) > a_n(\alpha,w) + \delta_n  \right) &\leq \alpha, \qquad w\in\Delta,\\
		\limsup_{n\to\infty} P\left(  T_n^*(D) > a_n^*(\alpha,D) + \delta_n  \right) &\leq \alpha, \qquad D\subset\Delta.
	\end{align*}
\end{theorem}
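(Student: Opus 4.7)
Under $H_0$ the shrinkage operators are linear and $\Sigma_t\equiv\Sigma_0$, so $\E\hat\Sigma_{k,n}^w = (k/n)\E\hat\Sigma_{n,n}^w$ and $T_n(w)$ is the supremum of a centered CUSUM. Define the oracle Gaussian CUSUM
\[
  T_n^G(w) \;=\; n^{-1/2}\,\max_{k\leq n}\Bigl|\, w^T \sum_{t=1}^k \eta_t \;-\; \tfrac{k}{n}\, w^T \sum_{t=1}^n \eta_t \,\Bigr|,
\]
with $\eta_t \sim \mathcal{N}(0,\Xi_t)$ independent and $\Xi_t$ the local long-run covariance matrix from Theorem \ref{thm:Gauss-matrix}. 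Applying Theorem \ref{thm:Gauss-matrix} to each of the two partial sums forming the CUSUM yields, uniformly in $w\in\Delta$,
\[
  |T_n(w) - T_n^G(w)| \;=\; O_P\!\bigl( \|v\|_1^2\, \sqrt{\log n}\, n^{-\xi(q,\beta)} \bigr) \;=\; o_P(\delta_n),
\]
the last equality because $\xi(q,\beta)>0$ for $q>8$ and $\beta>2$. I would then couple $R_n(w)$, in conditional distribution given the data, to the same $T_n^G(w)$, and pass from distributional closeness to quantile closeness via anti-concentration of the Gaussian supremum together with the deterministic buffer $\delta_n = (\log n)^{-p}$.

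\textbf{Coupling of $R_n$.} Conditionally on $X_1,\ldots,X_n$, $R_n(w)$ is the supremum of a Gaussian CUSUM with independent increments $\eta_t \sim \mathcal{N}(0,A_t)$, so its conditional law is fully determined by the partial sums $\sum_{t=b}^k A_t$. By Proposition \ref{prop:matrix} the three-dimensional series $\chi_s = (\nu^j \cdot (X_sX_s^T - \hat\Sigma_{n,n}))_{j=1,2,3}$ inherits (P.1)-(P.2) with constants proportional to $\tilde\Theta^2\|v\|_1^2$. The central step is the uniform local long-run-variance consistency
\[
  \max_{k\leq n}\Bigl\|\, \tfrac{1}{n}\sum_{t=b}^k \bigl(A_t - \Xi_t\bigr) \,\Bigr\| \;=\; o_P(1),
\]
which follows from a Newey--West argument in the spirit of Lemma \ref{lem:MSE}, combined with the BV regularity (P.2) that makes $t\mapsto \Xi_t$ vary on a scale much larger than the block length $b\asymp n^\zeta$. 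The moment bound $q>8$ supplies four finite moments of $\chi_s$ and $\beta>2$ gives summable autocovariances, so that block averaging estimates the local long-run variance with bias $O(b^{-(\beta-1)})$ and variance $O(b/n)$. Given this uniform consistency, the conditional covariance function of the bootstrap CUSUM matches that of $T_n^G(w)$ up to $o_P(1)$, and a Strassen-type Gaussian comparison shows that the conditional Kolmogorov distance between $R_n(w)$ and $T_n^G(w)$ vanishes in probability.

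\textbf{Conclusion and main obstacle.} Anti-concentration of the Gaussian supremum (Tsirelson--Ibragimov--Sudakov) yields a bounded Lebesgue density of $T_n^G(w)$ on every compact subset of $(0,\infty)$ whenever $\tfrac{1}{n}\sum_t w^T\Xi_t w$ is bounded away from zero; the conditional Kolmogorov closeness of $R_n(w)$ to $T_n^G(w)$ then transfers to $a_n(\alpha,w) = q_\alpha(T_n^G(w)) + o_P(1)$. Combining with the coupling of $T_n(w)$ and the buffer $\delta_n$,
\[
  \limsup_{n\to\infty}\, P\bigl(T_n(w) > a_n(\alpha,w) + \delta_n\bigr) \;\leq\; \alpha.
\]
The statement for $T_n^*(D)$ is structurally identical: Theorem \ref{thm:Gauss-matrix} already provides the Gaussian coupling uniformly in $w\in\Delta$, and Lipschitz continuity of $w\mapsto w^T \bigl(\tfrac{1}{n}\sum_t \Xi_t\bigr) w$ propagates the closeness to suprema over $D\subset\Delta$. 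The principal obstacle is the uniform-in-$k$ local long-run-variance consistency above; in particular, replacing the unknown $\Sigma_0$ by $\hat\Sigma_{n,n}$ inside $\chi_s$ contributes an additional bias of order $O_P(\|v\|_1^2/\sqrt{n})$ that must be shown to vanish uniformly in $k$ when integrated against the block kernels, and it is this control---rather than either of the two Gaussian comparisons---that drives the moment conditions $q>8$ and $\beta>2$.
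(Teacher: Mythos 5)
Your architecture matches the paper's in outline: couple $T_n(w)$ to a Gaussian CUSUM via Theorem \ref{thm:Gauss-matrix}, show the block-based bootstrap covariances are consistent, and transfer to quantiles using the buffer $\delta_n$. You have also correctly isolated the one computation that is genuinely new here, namely that centering $\chi_s$ at $\hat\Sigma_{n,n}$ rather than at $\Sigma_0$ costs only $O_P(1/\sqrt n)$ per summand. The paper's actual proof is a reduction: with $\chi_s$ replaced by $\bar\chi_s = [\nu^j\cdot(X_sX_s^T-\Sigma_0)]_j$ the statement is a special case of Proposition 4.3 of \cite{mies2022}, and the entire written argument is devoted to the quadratic expansion of $\hat Q(k)-\overline Q(k)$, bounded via a Rosenthal-type inequality to get $\E\max_k\|\hat Q(k)-\overline Q(k)\| = O(\sqrt{nb})$, which is exactly the tolerance under which Theorem 5.1 of \cite{mies2022} still applies. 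You assert this step ("must be shown to vanish uniformly in $k$") rather than carry it out, and you likewise assert the uniform local long-run-variance consistency of the $A_t$ by appeal to "a Newey--West argument in the spirit of Lemma \ref{lem:MSE}" --- but Lemma \ref{lem:MSE} concerns a different (kernel-weighted) estimator, and the consistency of the overlapping-block estimator is precisely the content of the cited result, not something that falls out of that lemma.

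The more substantive gap is your anti-concentration step. You require $\tfrac1n\sum_t w^T\Xi_t w$ to be bounded away from zero so that the Gaussian supremum has a bounded density. The theorem imposes no such non-degeneracy, and the paper goes out of its way (in the discussion after Theorem \ref{thm:Gauss-ts-proj}) to avoid lower bounds on the projected variance, since $\ell_1$-bounded projections of high-dimensional vectors with unrestricted cross-sectional dependence can have vanishing variance. In the degenerate case your argument as written does not deliver the conservative bound. The buffer $\delta_n=(\log n)^{-p}$ exists precisely to make anti-concentration unnecessary for the one-sided claim $\limsup P \le \alpha$: since the coupling error is $o_P(\delta_n)$, one compares $T_n$ with the Gaussian CUSUM shifted by $\delta_n$ and uses only the trivial inequality $P(T_n^G > q_\alpha(T_n^G)) \le \alpha$, never a density bound. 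You should either drop the non-degeneracy requirement and rework the quantile-transfer step around the buffer, or acknowledge that without it your proof covers only a strict subcase of the theorem. A minor further point: the conditions $q>8$, $\beta>2$ are driven by Proposition \ref{prop:matrix} (the quadratic transformation halves the available moment order, so $\chi_s$ has more than $4$ moments) feeding into the requirements of the cited coupling and bootstrap results, not primarily by the $\hat\Sigma_{n,n}$-versus-$\Sigma_0$ replacement as you suggest.
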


Hence, the proposed bootstrap scheme indeed maintains the specified size for a change in covariance. 
A particular feature of this test is that it is robust against nuisance changes, because even under the null hypothesis the time series $X_t$ may be nonstationary, except for its stationary marginal covariance matrix. 
The relevance of this kind of robustness has been first highlighted by \cite{Zhou2013} for changes in mean, see also \cite{Gorecki2018} and \cite{Pesta2018}. Changes in the second moment structure in the presence of non-constant mean have been studied by \cite{Dette2018} and \cite{schmidt2021}, and a robust CUSUM test for a broad class of parameters is introduced in \cite{mies2021}. Nevertheless, all these references focus on the low-dimensional case, whereas our new test is also applicable in high dimensions.

The power of the proposed test and bootstrap scheme is hard to investigate analytically, because the model framework is rather broad. Instead, we analyze the behavior of the proposed changepoint test via simulations in the subsequent section.

\section{Simulation}\label{sec:simulation}

To demonstrate the implementation of our proposed changepoint test and the benefits of shrinkage, we assess our methodology for three examples of high-dimensional time series.

\subsection{Model A}
First, we simulate a high-dimensional time series according to the vector ARMA(1,1) model
\begin{align*}
	X_t = a \Pi X_{t-1} + b\epsilon_t + \epsilon_{t-1}.
\end{align*}
The matrix $\Pi\in\R^{d\times d}$ is a permutation matrix, such that the model is non-explosive.
Furthermore, for any $t$, the $(\epsilon_t)_i$, $i=1,\ldots, d$, are chosen as independent, zero mean random variables having a symmetrized Gamma distribution with shape parameter $\alpha(t/n)$, standardized to unit variance. 
We set $\alpha(u) = 2+\sin(2\pi u)$, $u\in[0,1]$. 
Thus, the autocovariance structure is stationary, but the time-series is nonstationary. 
In particular, the long run covariance matrices $\Xi_t$ which occur in Theorem \ref{thm:Gauss-matrix} are non-constant.

We assess our changepoint procedure under the null hypothesis of no change, and under the alternative where $b$ changes at time $t= \lceil\frac{n}{2}\rceil$.
The projection vector $v$ is chosen randomly as $N/\|N\|_1$ for a standard Gaussian random vector $N\sim\mathcal{N}(0,I_{d\times d})$.
The threshold for the tapering estimator is chosen as $\tau^{\dagger}_n = n^{\frac{1}{2\alpha+1}} $, and $\tau_n^{\diamond} = (nd/\log(nd))^\frac{1}{2\alpha+1}$. 
Note that these thresholds are optimal for estimation under the spectral norm, in the class of bandable matrices with decay rate $\alpha$. 
In practice, $\alpha$ is unknown, and we use the threshold for $\alpha=2$ in our simulations.
To determine the shrinkage weights, we also use the threshold $\sigma=\sigma_n^\dagger = n^{\frac{1}{2\alpha+1}} \gg \tau_n^{\dagger}$.
For the bootstrap scheme, the additional offset is chosen as $\delta_n = \log(n)^{-4}$, and the block-length is $b_n=\lceil 5n^{0.2} \rceil$.

Table \ref{tab:power} reports the simulated size and power of the bootstrap test for different combinations of $n$ and $d=d_n$, and shrinkage weights $w^{(1)} = (1,0,0)$ and $w^{(2)}=(0.3, 0.3, 0.4)$.
We also consider the data-driven shrinkage weight $w^*$ as described in Section \ref{sec:optimal}, which is not covered by our Gaussian approximation results.
Regarding the choice of the bandwidth for the estimator $\widehat{MSE}(\hat{\Sigma}_n)$, we employ the method of \cite{NeweyWest1994} with Bartlett kernel as implemented in the R package \textsc{sandwich} (see \cite{sandwich2020}), and we determine a separate bandwidth for each component $\widehat{\Var}(\sqrt{n}(\hat{\Sigma}_n)_{i,j})$.

When simulating the process, we set $a=0.5$ and $b=0.5$ (resp.\ $b=0.75$ post-change).
Based on our simulation results, we find that the test maintains the specified size $10\%$ and is indeed slightly conservative, as established theoretically in Section \ref{sec:changepoint}. 
Moreover, the power of the changepoint test increases with sample size, as expected, but also with increasing dimension. 
This may be explained by the fact that here, the change affects all coordinates, such that the additional information can be used to improve the detection performance. 
It is also interesting to observe that the additional usage of the tapered and the Toeplitz estimator further improves the power of the test.
This is interesting because the shrinkage estimator has originally been proposed as a method to improve the performance of a point estimate. 
Our simulation results demonstrate that shrinkage can also be also useful for testing.

\begin{table}
    \centering
    \begin{tabular}{ll|cccc}
         &  & $n=100$ & $n=500$ & $n=1000$ & $n=2000$ \\ \hline\hline
         && \multicolumn{4}{c}{Model A} \\ \hline\hline
         $w^{(1)}$    & $d=4$    & 0.102 (0.09) & 0.220 (0.08) & 0.433 (0.09) & 0.727 (0.09) \\ 
         & $d=4n^{0.3}$   & 0.101 (0.09) & 0.289 (0.08) & 0.520 (0.08) & 0.824 (0.09) \\ 
         & $d=4n^{0.5}$  & 0.098 (0.07) & 0.282 (0.08) & 0.535 (0.08) & 0.822 (0.09) \\ \hline
         $w^{(2)}$   & $d=4$  & 0.107 (0.08) & 0.334 (0.08) & 0.594 (0.09) & 0.878 (0.10) \\ 
         & $d=4n^{0.3}$   & 0.155 (0.07) & 0.823 (0.08) & 0.990 (0.09) & 1.000 (0.09) \\ 
         & $d=4n^{0.5}$  & 0.201 (0.05) & 0.967 (0.07) & 1.000 (0.07) & 1.000 (0.08) \\ \hline
         $w^{*}$ & $d=4$  & 0.102 (0.08) & 0.252 (0.08) & 0.458 (0.09) & 0.751 (0.09) \\ 
         & $d=4n^{0.3}$   & 0.119 (0.08) & 0.530 (0.08) & 0.826 (0.08) & 0.985 (0.09) \\ 
         & $d=4n^{0.5}$  & 0.130 (0.06) & 0.697 (0.07) & 0.956 (0.07) & 1.000 (0.09) \\  \hline\hline
    %%%%%%%%
         && \multicolumn{4}{c}{Model B} \\ \hline\hline
    $w^{(1)}$    & $d=4$  & 0.108 (0.08) & 0.277 (0.08) & 0.503 (0.09) & 0.803 (0.09) \\ 
        & $d=4n^{0.3}$    & 0.102 (0.08) & 0.275 (0.07) & 0.502 (0.08) & 0.804 (0.09) \\ 
        & $d=4n^{0.5}$    & 0.101 (0.08) & 0.276 (0.07) & 0.505 (0.08) & 0.798 (0.09) \\ \hline
    $w^{(2)}$   & $d=4$   & 0.137 (0.08) & 0.438 (0.08) & 0.712 (0.09) & 0.938 (0.09) \\ 
        & $d=4n^{0.3}$    & 0.198 (0.08) & 0.861 (0.07) & 0.993 (0.08) & 1.000 (0.09) \\ 
        & $d=4n^{0.5}$    & 0.270 (0.07) & 0.966 (0.08) & 1.000 (0.08) & 1.000 (0.09) \\ \hline
        $w^{*}$ & $d=4$ & 0.115 (0.08) & 0.302 (0.08) & 0.531 (0.09) & 0.829 (0.09) \\ 
        & $d=4n^{0.3}$    & 0.133 (0.08) & 0.528 (0.07) & 0.837 (0.08) & 0.989 (0.09) \\ 
        & $d=4n^{0.5}$    & 0.157 (0.07) & 0.671 (0.07) & 0.941 (0.08) & 0.999 (0.09) \\
        \hline\hline
    %%%%%%%%
        && \multicolumn{4}{c}{Model C} \\ \hline\hline
        $w^{(1)}$    & $d=4$  & 0.098 (0.08) & 0.273 (0.08) & 0.502 (0.08) & 0.806 (0.09) \\ 
        & $d=4n^{0.3}$        & 0.104 (0.08) & 0.277 (0.08) & 0.510 (0.08) & 0.806 (0.09) \\ 
        & $d=4n^{0.5}$        & 0.094 (0.08) & 0.269 (0.07) & 0.506 (0.08) & 0.799 (0.09) \\ \hline
    $w^{(2)}$   & $d=4$       & 0.124 (0.08) & 0.415 (0.08) & 0.692 (0.09) & 0.929 (0.09) \\ 
        & $d=4n^{0.3}$ & 0.170 (0.08) & 0.774 (0.08) & 0.971 (0.08) & 1.000 (0.09) \\ 
        & $d=4n^{0.5}$ & 0.225 (0.07) & 0.908 (0.08) & 0.997 (0.08) & 1.000 (0.09) \\ 
        \hline
     $w^{*}$   & $d=4$ & 0.108 (0.08) & 0.321 (0.08) & 0.573 (0.08) & 0.856 (0.09) \\ 
        & $d=4n^{0.3}$ & 0.132 (0.08) & 0.460 (0.08) & 0.733 (0.08) & 0.947 (0.09) \\ 
        & $d=4n^{0.5}$ & 0.143 (0.08) & 0.518 (0.07) & 0.786 (0.08) & 0.952 (0.09) \\ \hline
    \end{tabular}
	\caption{Power (and size), nominal level $10\%$. p-values based on $10^3$ bootstrap samples, reported values based on $10^4$ Monte Carlo simulations.}
    \label{tab:power}
\end{table}

\subsection{Models B \& C}
As a second example, we consider the vector ARMA(1,1) process $X_t$ given by
\begin{align*}
    X_t = a X_{t-1} + b\epsilon_t + \epsilon_{t-1},
\end{align*}
with $a$ and $b$ as above, and $\epsilon_t\sim \mathcal{N}(0, A)$ iid random vectors, and we initialize $X_0 \sim \mathcal{N}(0,A\,(1+b^2)/(1-a^2))$ to ensure weak stationarity. 
The symmetric positive semidefinite matrix $A\in\R^{d\times d}$ is chosen as $A_{i,j}=\gamma(t_i-t_j)$, for $\gamma(h) = |h+1|^{2H} + |h-1|^{2H} - 2|h|^{2H}$, which is the autocovariance function of fractional Gaussian noise. y
For the values $t_i$, we consider the scenarios (B) $t_i=i$, such that $A$ is a Toeplitz matrix, and (C) $t_i = \sqrt{i}$, such that $A$ is not a Toeplitz matrix.
For the bootstrap scheme and for the determination of shrinkage weights, we use the same settings as in Model A.

The size and power of the changepoint test are presented in Table \ref{tab:power}.
As for Model (A), the test turns out to be conservative, and gains power with increasing sample size, and higher dimension. 
In our simulations, we also find the deterministic weight $w^{(2)}$ leading to higher power compared to the data-driven weight $w^*$.
Thus, future work could explore the optimal choice of shrinkage weights from a testing perspective.

Under the null hypothesis of no change, the marginal covariance matrix of $X_t$ is given by $\Cov(X_t) = A \frac{1+2ab + b^2}{1-a^2}$. 
Thanks to this explicit formula, we are able to analyze the error of estimation of the proposed shrinkage estimator with data-driven shrinkage weights, see Table \ref{tab:error-ARMA}. 
As benchmarks, we evaluate the performance of the estimators $\hat{\Sigma}_n$, $\hat{\Sigma}_n^\dagger$, and $\hat{\Sigma}_n^\diamond$ individually. 
For model B, the Toeplitz estimator $\hat{\Sigma}_n^\diamond$ is found to perform best, which could be expected because the true covariance matrix is of Toeplitz type. 
On the other hand, the shrinkage estimator always improves upon the sample covariance matrix $\hat{\Sigma}_n$, and is often close the performance of the tapered estimator $\hat{\Sigma}_n^\dagger$.
For model C, the decay of the off-diagonal entries of the true covariance matrix is slower, hence the performance of $\hat{\Sigma}_n^\dagger$ is also worse. 
Indeed, its estimation error is larger than the error of the sample covariance $\hat{\Sigma}_n$.
Yet, for $n\geq 500$, the error of the shrinkage estimator is lower than that of all other estimators. 
This demonstrates that our proposed shrinkage estimator not only chooses among the three estimators, but may indeed improve the performance even further due to the convex combination.

\begin{table}
    \centering
    \small
    \begin{tabular}{l|rrrr}
         & \multicolumn{1}{c}{$n=100$} & \multicolumn{1}{c}{$n=500$} & \multicolumn{1}{c}{$n=1000$} & \multicolumn{1}{c}{$n=2000$} \\ 
         %%%%
         \hline\hline
         & \multicolumn{4}{c}{Model B} \\ \hline\hline
         %%%%
         $d=4$ & 2.50$|$2.32$|$0.87$|$2.34 & 0.52$|$0.48$|$0.18$|$0.48 & 0.26$|$0.24$|$0.09$|$0.24 & 0.13$|$0.12$|$0.05$|$0.12 \\ 
        $d=4n^{0.3}$ & 6.38$|$3.84$|$0.78$|$4.01 & 2.55$|$1.29$|$0.16$|$1.42 & 1.49$|$0.76$|$0.08$|$0.82 & 0.95$|$0.49$|$0.04$|$0.53 \\ 
        $d=4n^{0.5}$  & 19.96$|$4.32$|$0.60$|$7.71 & 9.06$|$1.43$|$0.15$|$3.17 & 6.39$|$0.86$|$0.08$|$2.13 & 4.54$|$0.56$|$0.04$|$1.48 \\ 
        %%%%
        \hline\hline
        & \multicolumn{4}{c}{Model C} \\ \hline\hline
        %%%%
        $d=4$ & 2.63$|$2.61$|$1.49$|$2.45 & 0.55$|$0.67$|$0.61$|$0.52 & 0.28$|$0.42$|$0.50$|$0.27 & 0.14$|$0.29$|$0.44$|$0.14 \\ 
        $d=4n^{0.3}$ & 6.85$|$5.17$|$3.91$|$4.80 & 2.73$|$2.55$|$5.21$|$1.98 & 1.58$|$1.79$|$5.69$|$1.23 & 1.01$|$1.26$|$6.80$|$0.82 \\ 
        $d=4n^{0.5}$ & 21.25$|$12.47$|$9.05$|$10.64 & 9.48$|$11.13$|$12.98$|$5.75 & 6.66$|$12.37$|$15.69$|$4.55 & 4.70$|$12.27$|$19.12$|$3.53 \\ \hline
    \end{tabular}
    \caption{Mean square estimation error of the marginal covariance $\Sigma=\Cov(X_t)$ under stationarity, measured in the scaled Frobenius norm $\|\cdot\|_{F*}$. Errors are reported, in this order, for (i) sample covariance $\hat{\Sigma}_n$, (ii) tapered estimator $\hat{\Sigma}_n^\dagger$, (iii) Toeplitz-type estimator $\hat{\Sigma}_n^\diamond$, (iv) the shrinkage estimator $\hat{\Sigma}_n^{w^*}$. Reported values based on $10^4$ Monte Carlo simulations.}
    \label{tab:error-ARMA}
\end{table}

%%%%%%%%
%%%%%%%%
%%%%%%%%
\section{Proofs}\label{sec:proofs}

\begin{proof}[Proof of Lemma \ref{lem:jensen}] The result follows from Jensen's inequality:
\begin{align*}
	\E \left| \sum_{j=1}^d v_j X_{tj} \right|^{q} 
	\le \| v \|_{1}^{q} \E \left( \sum_{j=1}^d \frac{ |v_j| }{\| v \|_{1}} | X_{tj} | \right)^{q} 
	\le \| v \|_{1}^{q}  \sum_{j=1}^d \frac{ |v_j| }{\| v \|_{1}} \E | X_{tj} |^{q} 
	 = \| v \|_{1}^{q} \max_{1 \le j \le d} \E | X_{tj} |^{q}
\end{align*}
\end{proof}

%%%%%%%%
%%%%%%%%
%%%%%%%%
\begin{proof}[Proof of Theorem \ref{thm:Gauss-ts-proj}]
	The time series $Z_t$ may be written as $Z_t = \tilde{G}_t(\beps_t)$, with kernel $\tilde{G}_t = V G_t$, such that we may apply Theorem 3.1 of \cite{mies2022}.
	It can be verified that $\tilde{G}_t$ satisfies conditions (G.1) and (G.2) therein, with $\Gamma=\tilde{\Gamma}$ and $\Theta = \sqrt{m}\tilde{\Theta} \|V\|_\infty$.
	To see this, note that for any $m$-variate random variable $X$, we have
	\begin{align*}
		(\E \|X\|_2^q)^\frac{1}{q} 
		\leq (\E \|X\|_q^q)^\frac{1}{q} m^{\frac{1}{2}-\frac{1}{q}} 
		\leq m^\frac{1}{2} \max_{l=1,\ldots,m} (\E |X_l|^q)^\frac{1}{q},
	\end{align*}
	because $\|x\|_2 \leq m^{\frac{1}{2}-\frac{1}{q}} \|x\|_q$ for any vector $x\in\R^m$.
	Moreover, for any $l=1,\ldots,m$,
	\begin{align*}
		\left( \E |V_{l,\cdot} G_t(\beps_0)|^q\right)^\frac{1}{q} 
		&= \left( \E \left|\sum_{r=1}^d V_{l,r} G_{t,r}(\beps_0)\right|^q\right)^\frac{1}{q} \\
		&\leq  \sum_{r=1}^d |V_{l,r}| \, \left(\E |G_{t,r}(\beps_0)|^q\right)^\frac{1}{q} \qquad
		\leq \|V\|_\infty \tilde{\Theta},
	\end{align*}
	such that
	\begin{align*}
		\left( \E \|V G_t(\beps_0)\|^q\right)^\frac{1}{q} \leq \sqrt{m} \|V\|_\infty \tilde{\Theta}.
	\end{align*}
	The same argument can be used to establish (G.1) and (G.2).
\end{proof}

%%%%%%%%
%%%%%%%%
%%%%%%%%
\begin{proof}[Proof of Proposition \ref{prop:matrix}]
	Observe that for four random variables $Y_1, Y_2, Z_1, Z_2 \in L_{2q}$, it holds that
	\begin{align*}
		(\E |Y_1Z_1 - Y_2 Z_2|^{q})^\frac{1}{q} 
		&\leq (\E |(Y_1-Y_2)Z_1|^{q})^\frac{1}{q} + (\E |Y_2(Z_1- Z_2)|^{q})^\frac{1}{q} \\
		&\leq (\E |Y_1-Y_2|^{2q})^\frac{1}{2q} (\E |Z_1|^{2q})^\frac{1}{2q} + (\E |Y_w|^{2q})^\frac{1}{2q}(\E |Z_1-Z_2|^{2q})^\frac{1}{2q}.
	\end{align*}
	Inequality \eqref{eqn:ass-ergodic-proj} may be obtained by setting $Y_1 = G_{t,l}(\beps_t), Y_2 = G_{t,l}(\tilde{\beps}_{t,t-j})$ and $Z_1 = G_{t,l'}(\beps_t), Z_2 = G_{t,l'}(\tilde{\beps}_{t,t-j})$.
	Inequality \eqref{eqn:ass-BV-proj} may be obtained by setting $Y_1 = G_{t,l}(\beps_t), Y_2 = G_{t-1,l}(\beps_{t})$ and $Z_1 = G_{t,l'}(\beps_t)$, $Z_2 = G_{t-1,l'}(\beps_{t})$. 
\end{proof}

%%%%%%%%
%%%%%%%%
%%%%%%%%
\begin{proof}[Proof of Theorem \ref{ConsistencyTapperToeplitz}]
	 By \cite[Th.~3.2]{mies2022} we have under Assumption (P.1) for all $d, n$
	 \[
	  \left( \E \max_{k \le n} \left| \sum_{t=1}^k X_{ti} X_{tj} - ( \Var(X_t) )_{ij} \right|^2 \right)^{\frac{1}{2}} 
	  \leq C(\Theta,\beta) n^\frac{1}{2},\quad 1\leq i,j\leq d,
	 \]
	 such that $ \hat{\Sigma}_{n,ij} = ( \hat{\Sigma}_n)_{ij} $ satisfies 
	 \begin{equation}
	    \max_{1 \le i, j \le d} \E( \hat{\Sigma}_{n,ij} -  \ \Sigma_{n,ij} )^2 \leq C(\Theta,\beta) n^{-1}.
	 \end{equation}
	 
	 \noindent\underline{To show the bound for $\hat{\Sigma}_n^\dagger$}, observe that
	 \[
	   \E \| \hat{\Sigma}_n^\dagger - \ \Sigma_n \|_{F*}^2 \leq 4\, ( A_n + B_n + C_n )
	 \]
	 where
	 \begin{align*}
	 	  A_n & = d^{-1} \sum_{m \le \tau/2} \sum_{|i-j|=m} \E( \hat{\Sigma}_{n,ij} -  \ \Sigma_{n,ij} )^2, \\
	 	  B_n &= d^{-1} \sum_{\tau/2 < m \le \tau} \sum_{|i-j|=m} \omega(|i-j|)^2 \E\left(  \hat{\Sigma}_{n,ij} -   \Sigma_{n,ij} \right)^2, \\
	 	  C_n &= d^{-1} \sum_{\tau/2 < m \le \tau} \sum_{|i-j|=m}  (1-\omega(|i-j|))^2 \Sigma_{n,ij}^2 , \\
	 	  D_n & =  d^{-1} \sum_{m > \tau} \sum_{|i-j|=m} \Sigma_{n,ij}^2.
 	\end{align*}
    Clearly, $ |C_n + D_n| = O( \tau^{-2\alpha-1} ) $. 
    Furthermore, 
    \begin{align*}
     |A_n + B_n| 
     \leq  \frac{C}{n d} \sum_{m\leq \tau} \sum_{|i-j|=m} 1 
     \leq \frac{C}{n} \min(\tau,d).
     \end{align*}
    Thus, 
    \begin{align*}
        \E \| \hat{\Sigma}_n^\dagger - \ \Sigma_n \|_{F*}^2
        \leq C \left\{ \tau^{-2\alpha-1} + \frac{\min(\tau,d)}{n} \right\}.
    \end{align*}
    The optimal rate is attained for $\tau = \min( n^{\frac{1}{2\alpha+2}},\, d)$. 
   
   \noindent\underline{To show the bound for $\hat{\Sigma}_n^\diamond$} observe that by Jensen's inequality
   \begin{align*}
     \E( \hat{\sigma}_m - \ \sigma_m )^2 
     & =\E \left(  \frac{1}{d-m} \sum_{k-l=m} \hat{\Sigma}_{n,kl} - \ \Sigma_{n,kl} \right)^2 \\
     & \le   \frac{1}{d-m}  \sum_{k-l=m}  \E( \hat{\Sigma}_{n,kl} - \ \Sigma_{n,kl} )^2
     \\
     & \leq C(\Theta,\beta) n^{-1},
   \end{align*}
where $ \ \sigma_m = \frac{1}{d-m} \sum_{k-l=m} \ \Sigma_{n,kl} $ for all $i,j $ with $ i-j= m \le \tau$. We have
\begin{align*}
	\E \| \hat{\Sigma}_n^\diamond - \ \Sigma_n \|_{F,*}^2 
	& \le \frac{2}{d} \sum_{m \le \tau} \sum_{i-j=m} \E(  \hat{\Sigma}_{n,ij}^\diamond - \ {\Sigma}_{n,ij} )^2 + \frac{2}{d} \sum_{m>\tau} \sum_{i-j=m} \ ({\Sigma}_{n,ij})^2 \\
	& = \frac{2}{d} \sum_{m \le \tau} \sum_{i-j=m} \E( \omega(i-j) ( \hat{\sigma}_m -  \ \sigma_m ) )^2 + \frac{2}{d} \sum_{m>\tau} \sum_{i-j=m} \ ({\Sigma}_{n,ij})^2 \\
	& = G_n + H_n.
\end{align*}
Since $ \ {\Sigma}_{n,ij}^\diamond \leq C \, m^{-\alpha-1}$, we have $\sigma_m \leq C m^{-\alpha-1}$ for $ i-j = m$.
Thus, $H_n = O(\tau^{-2\alpha-1})$.
Moreover, $G_n = O(\min(d,\tau) / n)$.
We obtain the same bound as in case (i), i.e.
    \begin{align*}
        \E \| \hat{\Sigma}_n^\diamond - \ \Sigma_n \|_{F*}^2
        \leq C \left\{ \tau^{-2\alpha-1} + \frac{\min(\tau,d)}{n} \right\}.
    \end{align*}
    Optimal choice of $\tau$ yields the same rates as in case (i).
 \end{proof}
 
%%%%%%%%
%%%%%%%%
%%%%%%%%
\begin{proof}[Proof of Lemma \ref{lem:MSE}] One only needs to verify Condition 2 of \cite{Sancetta2008}. Essentially, this is specific Doukhan-Louhichi type weak dependence condition. It is known that Bernoulli shifts satisfy such weak dependence conditions, but the applying the results in the literature would require to impose stronger assumptions on $G_t$. Therefore, we give a direct proof working under (P.1). Let $ \bar{\epsilon}_{i,j} = (\epsilon_i, \ldots, \epsilon_{j+1}, \epsilon_j', \epsilon_{j-1}', \ldots) $ where $ \{ \epsilon_t' \} $ is an independent copy of $\{ \epsilon_t \} $. Fix $u, v \in \{1, \ldots, 4\} $, $ (i_1,\ldots, i_v), (s_1,\ldots, s_v) \in \N^v $ and $ (j_1,\ldots, j_u), (t_1, \ldots, t_u) \in \N^u$ such that $ s_1 \le \cdots \le s_u + r \le t_1 \le \cdots \le t_v $ for some $ r \in \N$. Put $ U_{t_v}(\beps_{t_v} ) = X_{t_1,j_1} \cdots X_{t_v,j_v}$ and $ V_{s_u}(\beps_{s_u} ) = Y_{s_1,i_1} \cdots Y_{s_u,i_u} $. Then
\begin{align*}
  \Cov( U_{t_v}(\beps_{t_v} ), V_{s_u}(\beps_{s_u} ) ) &
   = \Cov( U_{t_v}( \beps_{t_v-s_u} ), V_{s_u}(\beps_0) ) \\
  &= \Cov( U_{t_v}( \bar\beps_{t_v-s_u,0} ), V_{s_u}(\beps_0) ) + \Cov( U_{t_v}( \beps_{s_u,0} ) - U_{t_v}( \bar\beps_{t_v-s_u,0} ), V_{s_u}(\beps_0) )  \\
  &= \Cov( U_{t_v}( \beps_{s_u,0} ) - U_{t_v}( \bar\beps_{t_v-s_u,0} ), V_{s_u}(\beps_0) )
\end{align*}
By telescoping over the random variables $ U_{t_v}(l) = U_{t_v}( \epsilon_{t_v-s_u}, \ldots, \epsilon_{t_v-s_u-l+1}, \bar\beps_{t_v-s_u-l} ) $, so that $ \sum_{l>t_v-s_u} U_{t_v}(l) - U_{t_v}(l-1) = U_{t_v}( \beps_{s_u,0} ) - U_{t_v}( \bar\beps_{t_v-s_u,0} ) $, and noting that $ \E | (U_{t_v}(l) - U_{t_v}(l-1) ) V_{s_u}(\beps_0) | \le \Theta' l^{-\beta} $ for some constant $ \Theta' $ by (P.1), we obtain
\[
\Cov( U_{t_v}(\beps_{t_v} ), V_{s_u}(\beps_{s_u} ) ) = O( r^{-\beta+1} ),
\]
which establishes the Doukhan-Louhichi type weak dependence condition required in Condition 2 of \cite{Sancetta2008}, since $ \beta > 2$.
\end{proof}

%%%%%%%%
%%%%%%%%
%%%%%%%%
\begin{proof}[Proof of Theorem \ref{ConsistencyApproxErrors}] The first assertion follows from rearranging terms,
	\[
	   \hat{\Sigma}_n^\diamond - \hat{\Sigma}_n^\dagger  - (\Sigma_n^\diamond - \Sigma_n )  = (\hat{\Sigma}_n^\diamond - \Sigma_n^\diamond ) - (\hat{\Sigma}_n^\dagger - \Sigma_n)
	\]
	and the triangle inequality, so that
	\begin{align*}
	\E \| \hat{\Sigma}_n^\diamond - \hat{\Sigma}_n^\dagger  - (\Sigma_n^\diamond - \Sigma_n )  \|_F &\le \E \| \hat{\Sigma}_n^\diamond - \Sigma_n^\diamond \|_F + \E \| \hat{\Sigma}_n^\dagger - \Sigma_n \|_F \\
	& \le \left(  \E \| \hat{\Sigma}_n^\diamond - \Sigma_n^\diamond \|_F^2 \right)^\frac{1}{2} + \left( \E \| \hat{\Sigma}_n^\diamond - \Sigma_n^\diamond \|_F^2  \right)^\frac{1}{2}. 
	\end{align*}
	Hence, the result follows from Theorem~\ref{ConsistencyTapperToeplitz}.
	
	For the second assertion, use the fact that if $ A_n - B_n \le C_n $ on $ A_n \ge B_n $ and $ B_n-A_n \le D_n $ on $ A_n < B_n $, for random variables $A_n, B_n $ and bounds $ C_n, D_n $, then
	\begin{align*}
	  \E | A_n - B_n | &\le \E (A_n-B_n)1_{A_n\ge B_n} + \E (-A_n+B_n)1_{A_n<B_n} \\
	  & \le \E C_n + \E D_n 
%	  & \le \left( \E C_n^2 \right)^2 + \left( \E D_n^2 \right)^2.
	\end{align*}
    Apply this bound with $ A_n = \| \hat{\Sigma}_n^\diamond - \hat{\Sigma}_n^\dagger \|_{F*} $, $ B_n = \| \Sigma_n^\diamond - \Sigma_n^\dagger \|_{F*} $ and the bounds 
\[
     \| \hat{\Sigma}_n^\diamond - \hat{\Sigma}_n^\dagger \|_{F*} - \| \Sigma_n^\diamond - \Sigma_n\|_{F*}  \\
      \le \| \hat{\Sigma}_n^\diamond - \hat{\Sigma}_n^\dagger - ( \Sigma_n^\diamond - \Sigma_n  ) \|_{F*} = C_n
\]
as well as
\[
 \| \Sigma_n^\diamond - \Sigma_n \|_{F*} -\| \hat{\Sigma}_n^\diamond - \hat{\Sigma}_n^\dagger \|_{F*}  \\
\le \| \Sigma_n^\diamond - \Sigma_n  - ( \hat{\Sigma}_n^\diamond - \hat{\Sigma}_n^\dagger ) \|_{F*} = D_n,
\]
which follow from the triangle inequality. Therefore,
\[
  \E \left| \| \hat{\Sigma}_n^\diamond - \hat{\Sigma}_n^\dagger \|_{F*} - \| \Sigma_n^\diamond - \Sigma_n\|_{F*} \right| = o(1),
\]
which implies $ \left| \| \hat{\Sigma}_n^\diamond - \hat{\Sigma}_n^\dagger \|_{F*} - \| \Sigma_n^\diamond - \Sigma_n\|_{F*} \right| \to 0 $, in probability, which in turn gives $\left| \hat{E}_n^\diamond - E_n^\diamond \right| \to 0 $, in probability. 

To show (ii) observe that we have, similarly as above, the bounds
\[
 \pm ( \| \hat{\Delta}_n^\dagger \|_{F*} - \| \Delta_n^\dagger \|_{F*} ) \le \| \hat{\Delta}_n^\dagger - \Delta_n^\dagger \|_{F*}
\]
yielding
\[
  \E \left| \| \hat{\Delta}_n^\dagger \|_{F*} - \| \Delta_n^\dagger \|_{F*} \right|
  \le 2 \E \| \hat{\Delta}_n^\dagger - \Delta_n^\dagger \|_{F*}
  \le 2 \sqrt{ \E \| \hat{\Delta}_n^\dagger - \Delta_n^\dagger \|_{F*}^2 }.
\]
Thus, it suffices to study $ \E \| \hat{\Delta}_n^\dagger - \Delta_n^\dagger \|_{F*}^2 $. We have
\begin{align*}
  \E \| \hat{\Delta}_n^\dagger - \Delta_n^\dagger \|_{F*}^2
  & = d^{-1} \sum_{\tau^\dagger/2 \le m \le \tau^\dagger} \sum_{|i-j|=m} (1-\omega(|i-j|))^2 \E( \hat{\Sigma}_{n,ij} - \Sigma_{n,ij} )^2 \\
  & \qquad + d^{-1} \sum_{\tau^\dagger<m\le \sigma^\dagger} \sum_{|i-j|=m} \E( \hat{\Sigma}_{n,ij} - \Sigma_{n,ij} )^2 \\
  & \qquad +  d^{-1} \sum_{m>\sigma^\dagger} \sum_{|i-j|=m} \Sigma_{n,ij}^2 \\
  & = O( \sigma^\dagger n^{-1} ) + O( (\sigma^\dagger)^{-2\alpha+1} ) \\
  & =O\left( n^{-\frac{2\alpha-1+c-s}{2\alpha+c}} \right) + O\left( n^{-\frac{(s+1)(2\alpha-1)}{2\alpha+c}} \right).
\end{align*}
Both terms are $o(1) $ and the first term dominates if and only if $s>c/2\alpha$. 

(iii) The claim follows directly from the fact that
\[
 \langle \hat{\Delta}_n^\dagger, \hat{\Sigma}_n^\dagger - \hat{\Sigma}_n^\diamond \rangle_* 
 - \langle \Delta_n^\dagger, \Sigma_n^\dagger - \Sigma_n^\diamond \rangle_* 
 = \langle \hat{\Delta}_n^\dagger - \Delta_n^\dagger, \hat{\Sigma}_n^\dagger - \hat{\Sigma}_n^\diamond \rangle_*
 -  \langle \Delta_n^\dagger, \Sigma_n^\dagger - \Sigma_n^\diamond -\hat{\Sigma}_n^\dagger - \hat{\Sigma}_n^\diamond  \rangle_* 
\]
so that
\begin{align*}
\left| \langle \hat{\Delta}_n^\dagger, \hat{\Sigma}_n^\dagger - \hat{\Sigma}_n^\diamond \rangle_* 
 - \langle \Delta_n^\dagger, \Sigma_n^\dagger - \Sigma_n^\diamond \rangle_*   \right| 
 & \quad \le \| \hat{\Delta}_n^\dagger - \Delta_n^\dagger \|_{F*} \| \hat{\Sigma}_n^\dagger - \hat{\Sigma}_n^\diamond  \|_{F*} \\
 & \quad + \| \Sigma_n - \Sigma_n^\dagger \|_{F*} \| \Sigma_n^\dagger - \Sigma_n^\diamond -\hat{\Sigma}_n^\dagger - \hat{\Sigma}_n^\diamond \|_{F*} \\
 & = O( \| \hat{\Delta}_n^\dagger - \Delta_n^\dagger \|_{F*} \| ( \| \Sigma_n - \Sigma_n^\diamond \|_{F*} + o_P(1) ) + o(1) \\
 & = o_P(1),
\end{align*}
since
\[
  \| \hat{\Sigma}_n^\dagger - \hat{\Sigma}_n^\diamond \|_{F*} 
  = \| \Sigma_n - \Sigma_n^\diamond \|_{F*} + o_P(1)
\]
by (i),
\[
  \| \hat{\Delta}_n^\dagger - \Delta_n^\dagger \|_{F*} \to 0,  \quad \text{in probability},
\]
by (ii), 
\[
  \| \Sigma_n - \Sigma_n^\dagger \|_{F*}^2 \le d^{-1} \sum_{m>\tau/2} \sum_{|i-j|=m} \Sigma_{n,ij}^2 = O\left( (\tau/2)^{-2\alpha+1} \right)
  = O\left( n^{-\frac{2\alpha-1}{2\alpha+c}} \right)
\]
and $ \| \Sigma_n - \Sigma_n^\diamond \|_{F*} = O(1) $ by assumption. This completes the proof. 
\end{proof}

%%%%%%%%
%%%%%%%%
%%%%%%%%
\begin{proof}[Proof of Theorem \ref{thm:bootstrap}]
    Note that the statistics $T_n$ and the critical values $a_n(\alpha,w)$ are linear in $\|v\|_1$, such that we may suppose without loss of generality that $\|v\|_1=1$.
    If we replace the $\chi_s$ by the centered random vectors $\bar{\chi}_s = [\nu^j \cdot (X_sX_s^T - \Sigma_0)]_{j=1}^3$, then Theorem \ref{thm:bootstrap} is a special case of \cite[Prop.\ 4.3]{mies2022}.
    We briefly show that the estimation error of $\hat{\Sigma}_n$ versus the true covariance matrix is negligible.
    To this end, denote the sequential estimators of the asymptotic covariance, for $k=b,\ldots, n$, by 
    \begin{align*}
        \hat{Q}(k) = \sum_{t=b}^k \frac{1}{b} \left( \sum_{s=t-b+1}^t \chi_s \right)^{\otimes 2}, \qquad
        \overline{Q}(k) = \sum_{t=b}^k \frac{1}{b} \left( \sum_{s=t-b+1}^t \bar{\chi}_s \right)^{\otimes 2} .
    \end{align*}
    It suffices to show that
    \begin{align}
        \E \max_{k=1,\ldots, n} \|\hat{Q}(k) - \overline{Q}(k)\|_\tr = \mathcal{O}(\sqrt{nb}), \label{eqn:boostrap-estimation}
    \end{align}
    where $\|A\|_\tr = \tr((A A^T)^\frac{1}{2})$ denotes the trace norm of a symmetric matrix.
    Since the matrices $\hat{Q}(k)$ and $\overline{Q}(k)$ are always of fixed dimension $3\times 3$, the choice of norm is in fact irrelevant as all norms are equivalent, and we may use an arbitrary matrix norm $\|\cdot\|$.
    If \eqref{eqn:boostrap-estimation} holds, then both $\hat{Q}(k)$ and $\overline{Q}(k)$ satisfy Theorem 5.1 in \cite{mies2022}, such that the bootstrap scheme based on $\hat{Q}(k)$ is valid.
    
    To establish \eqref{eqn:boostrap-estimation}, denote $\Delta_n = \left[\nu^j \cdot (\Sigma_0 - \hat{\Sigma}_{n,n})\right]_{j=1}^3 = \chi_s - \bar{\chi}_s \in \R^3$, for all $s$. 
    Via quadratic expansion, we find that, for some universal $C$ which may change from line to line,
    \begin{align*}
        &\E \max_{k}\left\| \hat{Q}(k) - \overline{Q}(k) \right\| \\
        &= \E \max_k \left\| \sum_{t=b}^k \frac{1}{b} \left[ \left(\sum_{s=t-b+1}^t (\bar{\chi}_s + \Delta_n) \right)^{\otimes 2} -  \left(\sum_{s=t-b+1}^t \bar{\chi}_s \right)^{\otimes 2} \right] \right\| \\
        &= \E \max_k \left\| \sum_{t=b}^k \frac{1}{b} \left[ \left(\sum_{s=t-b+1}^t (\bar{\chi}_s (b\Delta_n)^T + (b\Delta_n) \bar{\chi}_s^T + (b\Delta_n)^{\otimes 2})\right)  \right] \right\| \\
        &\leq C \, \E \sum_{t=b}^n \left[ 2\left\|\sum_{s=t-b+1}^t \bar{\chi}_s  \Delta_n^T \right\| + b\left\|\Delta_n\right\|^2 \right] \\
        &\leq C\sum_{t=b}^n \sqrt{\E \left\| \sum_{s=t-b+1}^t \bar{\chi}_s \right\|^2} \sqrt{\E \|\Delta_n\|^2} + nb \E \|\Delta_n\|^2.
    \end{align*}
    By virtue of Proposition \ref{prop:matrix} and Lemma \ref{lem:jensen}, the centered time series $\bar{\chi}_s$ satisfies condition (G.1) of \cite{mies2022}, such that the Rosenthal-type inequality, Theorem 3.2 therein, is applicable. 
    This yields,
    \begin{align*}
        \sqrt{\E \left\| \sum_{s=t-b+1}^t \bar{\chi}_s \right\|^2}
        &\leq C \sqrt{b} \sum_{j=1}^\infty \tilde{\Theta} j^{-\beta} = \mathcal{O}(\sqrt{b}), 
    \end{align*}
    because $\tilde{\Theta}$ is fixed in the asymptotic regime of Theorem \ref{thm:bootstrap}.
    Analogously, 
    \begin{align*}
        \sqrt{\E \|\Delta_n\|^2} 
        &= \sqrt{ \E \left\| \tfrac{1}{n} \sum_{t=1}^n \bar{\chi}_t \right\|^2 } = \mathcal{O}(1/\sqrt{n}).
    \end{align*}
    Hence, $\E \max_{k}\left\| \hat{Q}(k) - \overline{Q}(k) \right\| = \mathcal{O}(\sqrt{nb} + b) = \mathcal{O}(\sqrt{nb})$. 
    The last inequality holds because $b\leq n$, and establishes \eqref{eqn:boostrap-estimation}.
\end{proof}

\section*{Acknowledgements}
The authors acknowledge support from Deutsche Forschungsgemeinschaft (DFG, grant STE 1034/11-2).

\bibliography{strong-approximation3}
\bibliographystyle{apalike}

\end{document}